\documentclass[10pt, two column, twoside]{IEEEtran}

\usepackage{amssymb}
\usepackage{amsmath}
\usepackage[lined,boxed,commentsnumbered, ruled]{algorithm2e}
\usepackage{mathrsfs}
\usepackage{algorithmic}
\usepackage{bm}
\usepackage{tikz}
\usetikzlibrary{arrows}
\usepackage{subfig}
\usepackage{graphicx,booktabs,multirow}

\definecolor{color1}{HTML}{D0B22B}
\definecolor{dred}{RGB}{128,0,0}
\definecolor{colorhkust}{RGB}{20,43,140}
\definecolor{colorshanghaitech}{RGB}{162,0,5}
\definecolor{colortsinghua}{RGB}{116,52,129}
\definecolor{colordark}{RGB}{184,134,11}
\usepackage{amsthm}
\theoremstyle{definition}
\newtheorem{lemma}{Lemma}

\newtheorem{proposition}{Proposition}

\newtheorem{remark}{Remark}


\setcounter{tocdepth}{2}  

\newcommand{\bs}[1]{{\bm{#1}}}


\newcommand{\trace}{{\textrm{Tr}}}





\begin{document}


\title{Generalized Low-Rank Optimization for Topological Cooperation  in Ultra-Dense Networks}
\author{Kai~Yang,~\IEEEmembership{Student Member,~IEEE,}
        Yuanming~Shi,~\IEEEmembership{Member,~IEEE,}
        and~Zhi~Ding,~\IEEEmembership{Fellow,~IEEE}

\thanks{K. Yang is with the School of Information Science and Technology, ShanghaiTech University, Shanghai, China, and also with the University of Chinese Academy of Sciences, Beijing, China (e-mail: yangkai@shanghaitech.edu.cn).}
\thanks{Y. Shi is with the School of Information Science and Technology, ShanghaiTech University, Shanghai, China (e-mail: shiym@shanghaitech.edu.cn).} \thanks{Z. Ding is with the Department of Electrical and Computer Engineering,
University of California at Davis, Davis, CA 95616 USA (e-mail:
zding@ucdavis.edu).}
}

\maketitle



\begin{abstract} 
Network densification is a natural way to support dense 
mobile applications under stringent requirements, such as ultra-low latency, ultra-high data rate, and massive connecting devices. 
Severe interference in ultra-dense networks poses a key bottleneck. 
Sharing channel state information (CSI) and messages across transmitters 
can potentially alleviate interferences and improve system performance. 
Most existing works on interference coordination require significant CSI signaling overhead and are impractical in ultra-dense networks. 
This paper investigate topological cooperation to manage interferences 
in message sharing based only on network connectivity information. 
In particular, we propose a generalized low-rank optimization approach to maximize achievable degrees-of-freedom (DoFs). To tackle the 
challenges of poor structure and non-convex rank function, we develop Riemannian optimization algorithms to solve 
a sequence of \textit{complex} fixed rank subproblems through a rank growth strategy. By exploiting the non-compact Stiefel manifold formed by the set of complex full column rank matrices, we develop Riemannian optimization algorithms to
solve the complex fixed-rank optimization problem by applying the semidefinite lifting technique and Burer-Monteiro factorization approach. Numerical results demonstrate the computational efficiency and  higher DoFs achieved by the proposed algorithms.
\end{abstract}

\begin{IEEEkeywords} 
Low-rank models, topological interference alignment, transmitter cooperation, degrees-of-freedom, Riemannian optimization in complex field.
\end{IEEEkeywords}

\section{Introduction}
The upsurge of wireless applications, including Internet-of-Things (IoT), Tactile Internet, tele-medicine and mobile edge artificial intelligence, 
is driving the paradigm shift of wireless networks from 
content delivery to skillset-delivery 
networks \cite{Fettweis_JSAC2016}.  Network densification \cite{shi2015largecooperative} has emerged as a promising approach to support innovative mobile applications with stringent requirements
such as ultra-low latency, ultra-high data rate and massive devices connectivity. Unfortunately, interference in
dense wireless network deployment becomes a key capacity limiting factor 
given large numbers of transmitters and receivers. 
Network cooperation through sharing channel state information (CSI) and messages among transmitting nodes is a viable technology to improve the spectral efficiency and energy efficiency in ultra-dense wireless networks.

Under shared CSI among transmitters, interference alignment \cite{cadambe2008interference} is shown to mitigate interferences base
on linear coding schemes, capable of achieving half the cake for each user in $K$-user interference channel. Cooperative transmission \cite{gesbert2010multi} with message sharing has shown to be able to further improve system throughput. In particular, through centralized signal processing and interference management with full message sharing via the cloud data center, cloud radio access network (Cloud-RAN) \cite{Yuanming_TWC2014} can harness the advantages of network densification. By pushing the storage resources to the network edge \cite{yang2016low_globalsip}, cache-aided wireless network \cite{maddah2014fundamental} provides a cost effective way to enable transmission cooperation.

Unfortunately, most existing works on network cooperation 
lead to significant channel signaling overhead. This is practically
challenging in ultra-dense networks. 
A growing body of recent works has
hence been focusing on CSI acquisition overhead reduction for interference coordination in wireless networks. Among them, delay effect
in CSI acquisition has been considered in \cite{maddah2012completely}. 
Both \cite{shi2015optimal} and \cite{razaviyayn2016stochastic} have studied
transceiver design using partial CSI, requiring instantaneous 
CSI for strong links and only
distribution CSI of the remaining weak links. 
In addition, finite precision CSI
feedback \cite{davoodi2017generalized} and the compressed channel estimation \cite{bajwa2010compressed} can further reduce 
CSI acquisition overhead. 

However, the applicability of the aforementioned results in practical 
systems remains unclear, which motivates a recent proposal on topological interference management (TIM) \cite{Jafar_TIT2013TIM}. The main idea of 
TIM is to manage the interference  based only on the network connectivity information, which can significantly reduce the CSI acquisition overhead. 
By requiring only network topologies, TIM becomes one of the most promising and powerful schemes for interference management in ultra-dense wireless networks. By further enabling message sharing, the work of \cite{yi2015topological} shows that transmitter cooperation based only on network topology information can strictly improve the degrees-of-freedom (DoFs). However, their results are only applicable to some specific network connectivity patterns.


In this paper, we propose a generalized low-rank optimization approach for investigating the benefits of topological cooperation for any network topology. We begin by first establishing the generalized interference alignment conditions based only on the network connectivity information 
with message sharing among transmitters. A low-rank model is further developed to maximize the achievable DoFs by exploiting the relationship between the model matrix rank and the achievable DoFs. The developed low-rank matrix optimization model thus generalizes the low-rank matrix completion model \cite{shi2016low} without message sharing among transmitters. Unfortunately, 
the resulting generalized low-rank optimization problem in complex field is non-convex and highly intractable due to poor structure, 
for which novel and efficient algorithms need to be developed.

Low-rank matrix optimization models have wide range of applications in machine learning, high-dimensional statistics, signal processing and wireless networks \cite{shi2016low,davenport2016overview,sridharan2015linear,papailiopoulos2012interference}. A wealth of recent works focus on both convex 
approximation and non-convex algorithms to solve the non-convex and highly intractable low-rank optimization problems. 
Nuclear norm is a well-known convex proxy for non-convex rank function with optimality guarantees under statistical models \cite{candes2009exact}. 
To further reduce the storage and computation overhead for low-rank optimization, non-convex approach based on matrix factorization shows
good promises \cite{jain2013low}. With suitable statistical models, the non-convex methods can also find globally optimal solution for some structured optimization problems such as matrix completion \cite{ge2016matrix}. In particular, the work of \cite{yi2016topological} adopted an alternating minimization algorithm to exploit topological transmitter cooperation gains. This algorithm stores the iterative
results in the factored form and optimizes over one factor 
while fixing the other.

Nevertheless, the nuclear norm based convex relaxation approach 
in fact fails to solve the formulation of generalized low-rank matrix optimization problem because of the poor structures. Actually, the nuclear norm minimization approach always yields a full-rank matrix solution. Alternating minimization \cite{jain2013low} algorithm by factorizing the fixed-rank matrix is particularly
useful when the resulting problem is biconvex with respect 
to the two factors in matrix factorization. However, the convergence
 of the alternating minimization algorithm heavily depends
on the initial points with slow convergence rates. 
It may also yield poor performance in achievable DoFs, as it 
only guarantees convergence to the first-order stationary points \cite{shi2016low,yi2016topological}. In contrast, Riemannian optimization \cite{Absil_2009optimizationonManifolds} approach has shown to be effective 
in improving the achievable DoFs by solving the low-rank matrix 
optimization problems, as the Riemannian trust-region algorithm guarantees convergence to the second-order stationary points with high precision solutions \cite{shi2016low}. Furthermore, the Riemannian optimization algorithms are robust to initial points in ensuring convergence \cite{boumal2016global} with fast convergence rates. However, no available Riemannian optimization algorithms have been developed for the general non-square low-rank problems in the 
\textit{complex field}.  In this work,
we develop Riemannian optimization algorithms for solving the presented generalized low-rank optimization problem in the complex field.

\subsection{Contributions}
In this paper, we develop a generalized interference 
alignment condition to enable transmitter cooperation  based
only on the network topology information.
We present a generalized low-rank model to maximize the achievable DoFs. 
To address the special challenges in the resulting generalized low-rank optimization problem, we develop Riemannian optimization algorithms by exploiting the non-compact Stiefel manifold of 
fixed-rank matrices in complex field. Specifically, we propose to solve the generalized low-rank optimization problem by solving a sequence of fixed rank subproblems with rank increase. By applying
semidefinite lifting technique \cite{ge2017no}, the fixed rank 
subproblem is reformulated as a positive semidefinite matrix problem 
in complex field with rank constraint. By applying the Burer-Monteiro \cite{burer2003nonlinear} parameterization approach to factorize the 
positive semidefinite matrix, the resulting problem turns out to be a
Riemannian optimization problem on complex non-compact Stiefel manifold. Therefore, the generalized low-rank optimization problem can be successfully solved by developing Riemannian optimization algorithms on the complex-valued non-compact Stiefel manifold. 

We  summarize the main contributions of this work as follows:
\begin{enumerate}
        \item We establish a generalized interference alignment condition  to enable transmitter cooperation with message sharing  based 
        only
        on network connectivity information. We develop a generalized low-rank model to maximize the achievable DoFs.
        \item We develop first-order and second-order Riemannian optimization algorithms for solving the generalized low-rank optimization problem in \emph{complex field}. We exploit
         the complex compact Stiefel manifold of complex fixed-rank matrices using the semidefinite lifting and Burer-Monteiro factorization techniques.
        \item Numerical results demonstrate that the proposed second-order Riemannian trust-region algorithm is able to achieve
        the highest DoFs with high precision second-order stationary point solutions. Furthermore, its computing time is comparable to the first-order Riemannian conjugate gradient algorithm in medium network sizes. Overall, the Riemannian algorithms show much better performance than the alternating minimization algorithm.
\end{enumerate}

\subsection{Organization and Notation}
The remainder of this paper is organized as follows. In Section \ref{sec:system}, we first introduce the system model, before
 establishing the generalized topological interference alignment conditions.
We develop the generalized low-rank model
in Section \ref{sec:GLRM}, 
and derive a positive semidefinite reformulation with the Burer-Monteiro approach to address the low-rank optimization problem in complex field.
We derive Riemannian algorithms on complex non-compact Stiefel manifold 
in Section \ref{sec:manifold}. Section \ref{sec:simulation}
provides simulation results. Finally, we conclude this work
in Section \ref{sec:conclusion}, 

We use $[K]$ denote the set $\{1,2,\cdots,K\}$. $\mathbb{S}_+^{N}$ denotes the set of all $N\times N$ Hermitian positive semidefinite matrices. And $\langle \cdot,\cdot \rangle$ denotes inner product, i.e., $\langle \bs{A}, \bs{X}\rangle=\trace(\bs{A}^{\sf{H}}\bs{X})$.

\section{System Model and Problem Formulation}\label{sec:system}
In this section, we establish the generalized interference alignment condition for partially connected $K$-user interference channel with transmitter cooperation.

\setlength\arraycolsep{2pt}
\subsection{System Model}
Consider a partially-connected interference channel with $K$ 
single-antenna transmitters and $K$ single-antenna receivers.
Transmitters aim to deliver
a set of independent messages $W_1,W_2$, $\cdots,W_K$ 
to receivers $1,\,2,\,\cdots,\,K$, respectively. 
Transmitter $k$ has message $W_k$ and is always connected with receiver $k$.
The channel coefficient $h_{kl}\in\mathbb{C}$ 
between the $l$-th transmitter and the $k$-th user is nonzero only for $(k,l)\in\mathcal{E}$. 
Block fading channel model is 
considered in this paper, i.e., $h_{kl}$ remains stationary 
in $r$ consecutive 
channel uses, during which the input-output relationship is given by
\begin{eqnarray}
\bs{y}_k=\sum_{(k,i)\in\mathcal{E}}{h}_{ki}{\bs{x}}_i+\bs{z}_k, ~\forall k\in[K],
\end{eqnarray}
where ${\bs{x}}_{i}\in\mathbb{C}^{r}$ is the transmitted signal at transmitter $i$, $\bs{y}_k\in\mathbb{C}^{r}$ is the received signal 
at receiver $k$, and $\bs{z}_k\in\mathbb{C}^{r}$ is the additive 
isotropic white Gaussian noise, i.e., $\bs{z}_k\sim\mathcal{CN}(\bs{0},\bs{\Sigma}_k)$ with 
$\bs{\Sigma}_k\in\mathbb{C}^{r\times r}$. Partial connectivity 
of the interference channel provides opportunities to enable 
cooperative transmission based 
only on the network connectivity information. Specifically, transmitter 
cooperation is enabled with message sharing, for which we denote the index set of 
messages available at transmitter $k$ as $\mathcal{S}_k\subseteq[K]$. A 
$5$-user example of such system is shown in Fig. \ref{fig:system}.

Let $R(W_k)$ be the achievable data rate of message $W_k$, i.e., 
there exists a coding scheme such that the rate of message $W_k$ 
is $R(W_k)$ and the decoding error probability can be arbitrarily small.
Let $\textrm{SNR}$ denote the signal-to-noise-ratio. 
For each message delivery, the degree-of-freedom (DoF) \cite{Jafar_TIT2013TIM}, the first order characterization of channel capacity, is defined as
\begin{equation}
        \textrm{DoF}(W_k) = \lim_{\textrm{SNR}\rightarrow\infty}\frac{R(W_k)}{\log(\textrm{SNR})},~\forall k\in[K].
\end{equation}
The set of achievable DoF allocation is denoted as $\{\textrm{DoF}(W_1),\cdots,\textrm{DoF}(W_K)\}$, 
whose closure is called the DoF region. 
This paper adopts DoF as the performance metric
and designs a linear coding scheme.  

\subsection{Linear Coding Strategy}
Linear coding scheme is attractive for interference management owing
to its low complexity. Specifically, its optimality in terms 
of DoF has been shown via interference alignment \cite{cadambe2008interference}. Its effectiveness has been demonstrated in the problems of topological interference management (TIM) and index coding \cite{Jafar_TIT2013TIM}. We thus focus on linear coding scheme 
to design low complexity and efficient approaches for maximizing
achievable DoFs. 

Suppose each message $W_k$ is represented by a complex vector $\bs{s}_k\in\mathbb{C}^{d_k}$ with $d_k$ data streams. Let ${\bs{V}}_{kj}\in\mathbb{C}^{r\times d_j}$ be the precoding matrix at transmitter $k$ for message $W_j$. Then the
transmitted signal is given by
\begin{equation}
        \bs{x}_k = \sum_{j\in\mathcal{S}_k}\bs{V}_{kj}\bs{s}_j.
\end{equation}
Consequently, the received signal at user $k$ is 
\begin{equation}
\bs{y}_k = \!\!\!\!\!\!\!\!\!\!\!\!\sum_{j:(k,j)\in\mathcal{E},k\in\mathcal{S}_j}\!\!\!\!\!\!\!\!\!\! h_{kj}\bs{V}_{jk}\bs{s}_k+ \sum_{i\ne k}\sum_{j:(k,j)\in\mathcal{E},i\in\mathcal{S}_j}\!\!\!\! \!\!\! h_{kj}\bs{V}_{ji}\bs{s}_i+\bs{z}_k.
\end{equation}
We let $\bs{U}_k\in\mathbb{C}^{r\times d_k}$ be the decoding matrix at receiver $k$. 
\begin{figure}
        \centering
        \includegraphics[width=0.65\columnwidth]{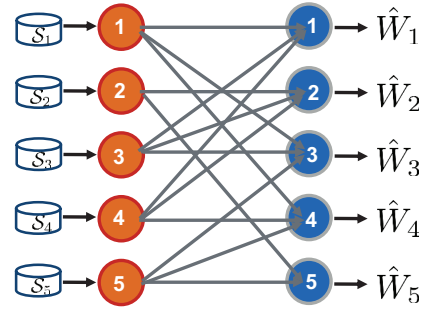}
        \caption{The architecture of the partially-connected $K$-user interference channel with transmitter cooperation. $\mathcal{S}_i$ denotes the index set of messages available at transmitter $i$.}\label{fig:system}
\end{figure}

In densified wireless networks, interference is a key bottleneck to support high data rate and low latency. To alleviate interferences by aligning the intersection of interference spaces, the following interference alignment conditions were presented in \cite{Jafar_TIT2013TIM,bresler2014feasibility}
\begin{eqnarray}
\label{equa:IA1}
\sum\limits_{j:(k,j)\in\mathcal{E},k\in\mathcal{S}_j}h_{kj}\bs{U}_k^{\sf{H}}\bs{V}_{jk} &\ne &0,~\forall k\in [K], \\
\label{equa:IA2}
\sum\limits_{j:(k,j)\in\mathcal{E},i\in\mathcal{S}_j}h_{kj}\bs{U}_k^{\sf{H}}\bs{V}_{ji} &=&0,~i\ne k.
\end{eqnarray}
Correspondingly, the message at receiver $k$ is decoded via
\begin{equation}
\hat{\bs{s}}_k=(\!\!\!\!\!\!\sum\limits_{j:(k,j)\in\mathcal{E},k\in\mathcal{S}_j}\!\!\!\!\!\!h_{kj}\bs{U}_k^{\sf{H}}\bs{V}_{jk})^{-1}\bs{U}_k^{\sf{H}}\bs{y}_k.
\end{equation}
If there exists $\bs{U}_k$'s, $\bs{V}_{ji}$'s satisfying interference alignment conditions (\ref{equa:IA1}) and (\ref{equa:IA2}), DoF tuple $({d_1}{r}^{-1},\cdots,{d_K}{r}^{-1})$ is then achievable. We thus can achieve the highest DoF by finding the minimal channel use number $r$. 

\subsection{Topology-Based Alignment Condition}
Note that equations (\ref{equa:IA1}) and (\ref{equa:IA2}) are always feasible by increasing $r$. 
However, the interference alignment conditions (\ref{equa:IA1}) and (\ref{equa:IA2}) require the knowledge of channel coefficients $h_{ij}$s at the transmitters. In practice, obtaining dense network channel state information (CSI)
at transmitters often requires large signaling overhead, 
which presents a severe obstacle to their application in
densified wireless networks. 
One desirable way to address the CSI acquisition overhead issue is to 
establish new interference alignment conditions based only on the network connectivity information, for which we present the following generalized interference alignment conditions for \textit{topological cooperation}
\begin{eqnarray}
\label{siso_cgm1}
&&\textrm{det}\left(\sum_{j:(k,j)\in\mathcal{E},k\in\mathcal{S}_j}{\bs{U}}_k^{\sf{H}}{\bs{V}}_{jk}\right)\ne
0,~~\forall k\in [K],\\
\label{siso_cgm2}
&&\bs{U}_k^{\sf{H}}\bs{V}_{ji} =\bs{0},~~i\in\mathcal{S}_j,i\ne k, (k,j)\in\mathcal{E}.
\end{eqnarray}
Here, ``topological cooperation" refers to the fact that for cooperation enabled transmitters, we design transceivers to manage interferences based on network topology information instead of instantaneous channel state information. 
\begin{proposition}
        For generic channel coefficients $h_{ij}$'s randomly distributed according
 to some continuous probability distribution \cite{razaviyayn2012degrees}, if (\ref{siso_cgm1}) and (\ref{siso_cgm2}) hold for some $\bs{U}_{k},\bs{V}_{ji}$s based only on the network topology information, then they shall satisfy the channel dependent interference alignment conditions (\ref{equa:IA1}) and (\ref{equa:IA2}) with probability 1.
\end{proposition}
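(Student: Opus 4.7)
The plan is to verify the two channel-dependent conditions (\ref{equa:IA1}) and (\ref{equa:IA2}) separately, exploiting that (\ref{equa:IA2}) reduces to the topological condition term by term, while (\ref{equa:IA1}) requires a probabilistic ``generic non-vanishing'' argument.

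First I would dispatch (\ref{equa:IA2}) by a direct substitution. For each $k$ and each $i\ne k$, the summation in (\ref{equa:IA2}) is taken exactly over those $j$ with $(k,j)\in\mathcal{E}$ and $i\in\mathcal{S}_j$. By (\ref{siso_cgm2}) every such summand $\bs{U}_k^{\sf{H}}\bs{V}_{ji}$ is already $\bs{0}$, so the whole sum vanishes regardless of the $h_{kj}$'s. This direction uses no randomness.

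Next I would address (\ref{equa:IA1}), interpreting the statement $\ne 0$ in the sense required by the decoder (invertibility of the $d_k \times d_k$ matrix). For each user $k$, define
\[
p_k(\{h_{kj}\}) \;:=\; \det\!\Bigl(\sum_{j:(k,j)\in\mathcal{E},\,k\in\mathcal{S}_j} h_{kj}\,\bs{U}_k^{\sf{H}}\bs{V}_{jk}\Bigr),
\]
a polynomial in the complex variables $\{h_{kj}\}_{j:(k,j)\in\mathcal{E},k\in\mathcal{S}_j}$ of degree at most $d_k$. Evaluating $p_k$ at $h_{kj}=1$ for all $j$ produces the very left-hand side of (\ref{siso_cgm1}), which by hypothesis is nonzero. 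Hence $p_k$ is not identically zero as a polynomial.

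Finally I would invoke the standard fact that the zero set of a nonzero polynomial on $\mathbb{C}^n$ is a proper algebraic variety, and therefore has Lebesgue measure zero. Because the $h_{kj}$ are drawn according to a continuous probability distribution (absolutely continuous with respect to Lebesgue measure on $\mathbb{C}$, as stipulated by the hypothesis on generic channel coefficients \cite{razaviyayn2012degrees}), they avoid this measure-zero variety with probability one; thus $p_k(\{h_{kj}\})\ne 0$ almost surely. A union bound over the finitely many users $k\in[K]$ then establishes (\ref{equa:IA1}) simultaneously with probability one, completing the proof. The only subtle point — and hence the ``hard part'' — is the observation that condition (\ref{siso_cgm1}) is exactly what is needed to certify $p_k\not\equiv 0$; everything else is a routine application of the generic-position principle.
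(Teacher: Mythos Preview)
Your proof is correct and follows essentially the same route as the paper: both reduce (\ref{equa:IA1}) to the observation that the determinant is a polynomial in the channel gains whose zero set is a proper algebraic variety of Lebesgue measure zero, certified non-trivial precisely by condition (\ref{siso_cgm1}). Your argument is in fact slightly cleaner than the paper's, which inserts an unnecessary detour through the full-rank-ness of $[\bs{Z}_1\cdots\bs{Z}_T]$ before invoking the hypersurface argument, whereas you go directly from ``$p_k(1,\dots,1)\ne 0$'' to ``$p_k\not\equiv 0$''; you also make explicit the treatment of (\ref{equa:IA2}) and the union bound over $k$, which the paper leaves implicit.
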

\begin{proof}
Let $\bs{Z}_{1},\cdots,\bs{Z}_T\in\mathbb{C}^{d\times d}$ denote the set of matrices $\{\bs{U}_k^{\sf{H}}\bs{V}_{jk}:(k,j)\in\mathcal{E},k\in\mathcal{S}_j\}$ given $k$. Our goal is to prove that if $\textrm{det}\Big(\sum_{t=1}^{T}\bs{Z}_{t}\Big)\ne 0$, the probability of $\textrm{det}\Big(\sum_{t=1}^{T}h_t\bs{Z}_{t}\Big)\ne 0$ is $1$ for generic $h_1,\cdots,h_T$. Note that $\sum_{t=1}^{T}\bs{Z}_{t} = \begin{bmatrix}
        \bs{Z}_1 & \cdots & \bs{Z}_T
    \end{bmatrix}\begin{bmatrix}
        \bs{I} & \cdots & \bs{I}
    \end{bmatrix}^H$. Thus, the condition $\textrm{det}\Big(\sum_{t=1}^{T}\bs{Z}_{t}\Big)\ne 0$ implies that $\begin{bmatrix}
        \bs{Z}_1 & \cdots & \bs{Z}_T
    \end{bmatrix}$ has full rank, i.e., the dimension of $\textrm{span}\{\bs{Z}_j\}$ is $r$. Since the solution to the determinant equation $\textrm{det}\Big(\sum_{t=1}^{T}h_t\bs{Z}_{t}\Big)= 0$ is an algebraic hypersurface \cite{bresler2014feasibility}, the probability of the linear combination $\sum_{t=1}^{T}h_t\bs{Z}_{t}$ with generic coefficients $h_1,\cdots,h_T$ lying on the algebraic hypersurface is hence zero. Therefore, $\textrm{det}\Big(\sum_{t=1}^{T}h_t\bs{Z}_{t}\Big)\ne 0$ holds with probability $1$.
\end{proof}

By leveraging conditions (\ref{siso_cgm1}) and (\ref{siso_cgm2}), interferences can be aligned based only on network topology CSI instead of full 
CSI. 
This significantly reduces the overhead of CSI acquisition. In particular, the topological interference alignment condition without message sharing is given by \cite{shi2016low}
\begin{eqnarray}
\label{TIM_con1}
&&\textrm{det}\left({\bs{U}}_k^{\sf{H}}{\bs{V}}_{k}\right)\ne
0,~~\forall k\in [K],\\
\label{TIM_con2}
&&\bs{U}_k^{\sf{H}}\bs{V}_{j} =\bs{0},~~j\ne k, (k,j)\in\mathcal{E},
\end{eqnarray}
which is a special case of (\ref{siso_cgm1}) and (\ref{siso_cgm2}) with $\mathcal{S}_j=\{j\}$ and ${\bs{V}}_{j}$ denoting as ${\bs{V}}_{jj}$. Conditions (\ref{siso_cgm1}) and (\ref{siso_cgm2}) thus manifest the benefits of transmitter cooperation, as solutions to (\ref{TIM_con1}) and (\ref{TIM_con2}) are always solutions to (\ref{siso_cgm1}) and (\ref{siso_cgm2}), but not conversely.

\begin{remark}
    This work assumes that there are equal number of transmitters and receivers. Nevertheless, the principle applies for arbitrary number of transmitters and receivers. This is because both Proposition 1 and the low-rank matrix representation for precoding and decoding matrices in Section 3 hold for any number of transmitters and receivers. For simplicity of notation we consider a system with $K$ transmitters and receivers in this paper.
\end{remark}

\section{Generalized Low-Rank Optimization for Topological Cooperation}\label{sec:GLRM}
This section develops a generalized low-rank optimization framework to maximize achievable DoFs under topological cooperation. To address the challenges of the present generalized low-rank optimization problem in complex field and 
to exploit the algorithmic benefits of Riemannian optimization, we propose to reformulate an optimization problem over the complex non-compact Stiefel manifold by using the semidefinite lifting and Burer-Monteiro approaches.

\subsection{Generalized Low-Rank Model for Topological Cooperation}
Without loss of generality, we restrict $\sum_{j:(k,j)\in\mathcal{E},k\in\mathcal{S}_j}{\bs{U}}_k^{\sf{H}}{\bs{V}}_{jk}=\bs{I}$ in condition (\ref{siso_cgm1}). By letting $m=\sum_{k}d_k,~ n=K\sum_{k}d_k$ and defining
\begin{align*}
         \bs{U} &= \begin{bmatrix}
                        \bs{U}_1 & \cdots & \bs{U}_{K}
                \end{bmatrix}\in\mathbb{C}^{r\times m}, \\
                \bs{V}_{j}&=[{\bs{V}}_{j1},\cdots,{\bs{V}}_{jK}]\in\mathbb{C}^{r\times m}, \\
                \bs{V}&= \begin{bmatrix}
                        \bs{V}_1 & \cdots & \bs{V}_{K}
                \end{bmatrix}\in\mathbb{C}^{r\times n},\\
                \bs{X}&=[\bs{X}_{kj}^{i}]=[\bs{U}_k^{\sf{H}}\bs{V}_{ji}]={\bs{U}}^{\sf{H}}{\bs{V}}\in\mathbb{C}^{m\times n},
\end{align*}
the rank of matrix $\bs{X}$ is given as
\begin{equation}
        \textrm{rank}(\bm{X})=r = d_k/\text{DoF}(W_k).
\end{equation}
We thus can maximize the achievable DoF for interference-free message delivery by solving the following \emph{generalized low-rank optimization} problem
\begin{eqnarray}
\mathscr{P}:\mathop{\textrm{minimize}}_{{\bs{X}}\in\mathbb{C}^{m\times n}}&& \textrm{rank}({\bs{X}}) \nonumber\\
\textrm{subject to}&& \mathcal{A}({\bs{X}})={\bs{b}},\label{GLRM_rank}
\end{eqnarray}
where the affine constraint $\mathcal{A}({\bs{X}})={\bs{b}}$ captures
\begin{eqnarray}
&& \sum_{j:(k,j)\in\mathcal{E},k\in\mathcal{S}_j}\bs{X}_{kj}^{k}=\bs{I},~\forall k\in [K] \label{prob:GLRM1eq1}\\
&&\bs{X}_{kj}^{i} =\bs{0},~i\ne k,i\in\mathcal{S}_j, (k,j)\in\mathcal{E} \label{prob:GLRM1eq2}
\end{eqnarray}
and $\mathcal{A}:\mathbb{C}^{m\times n}\mapsto\mathbb{C}^{l}$.

For the simpler case without message sharing, the topological interference alignment problem can be formulated as the following low-rank matrix completion problem \cite{shi2016low,hassibi2014topological}
\begin{eqnarray}\label{prob:TIM}
\mathop{\textrm{minimize}}_{\bm{X}\in\mathbb{C}^{m\times m}}&& \textrm{rank}(\bs{X}) \nonumber\\
\textrm{subject to}&& \bs{X}_{kk}=\bs{I},~\forall k\in [K] \nonumber\\
&&\bs{X}_{kj} = \bs{0},~j\ne k, (k,j)\in\mathcal{E},
\end{eqnarray}
which is a special case of problem $\mathscr{P}$. The resulting low-rank matrix completion model is demonstrated in Fig. \ref{fig:tim_lrmc}. 
\begin{figure}[t]
        \centering
        \subfloat[TIM]{\includegraphics[width=0.47\columnwidth]{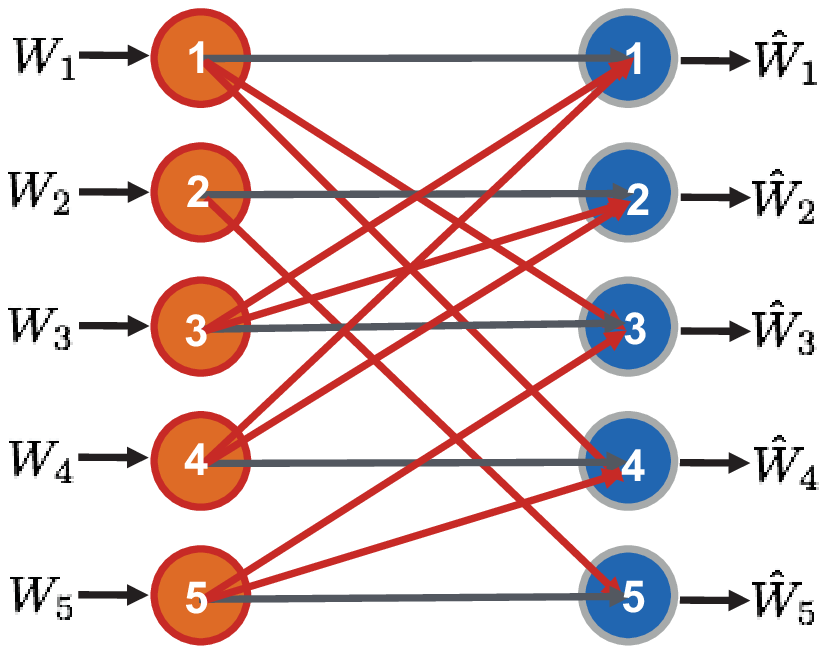}\label{fig:TIM}} \hfil
        \subfloat[Matrix Completion Model]{\includegraphics[width=0.45\columnwidth]{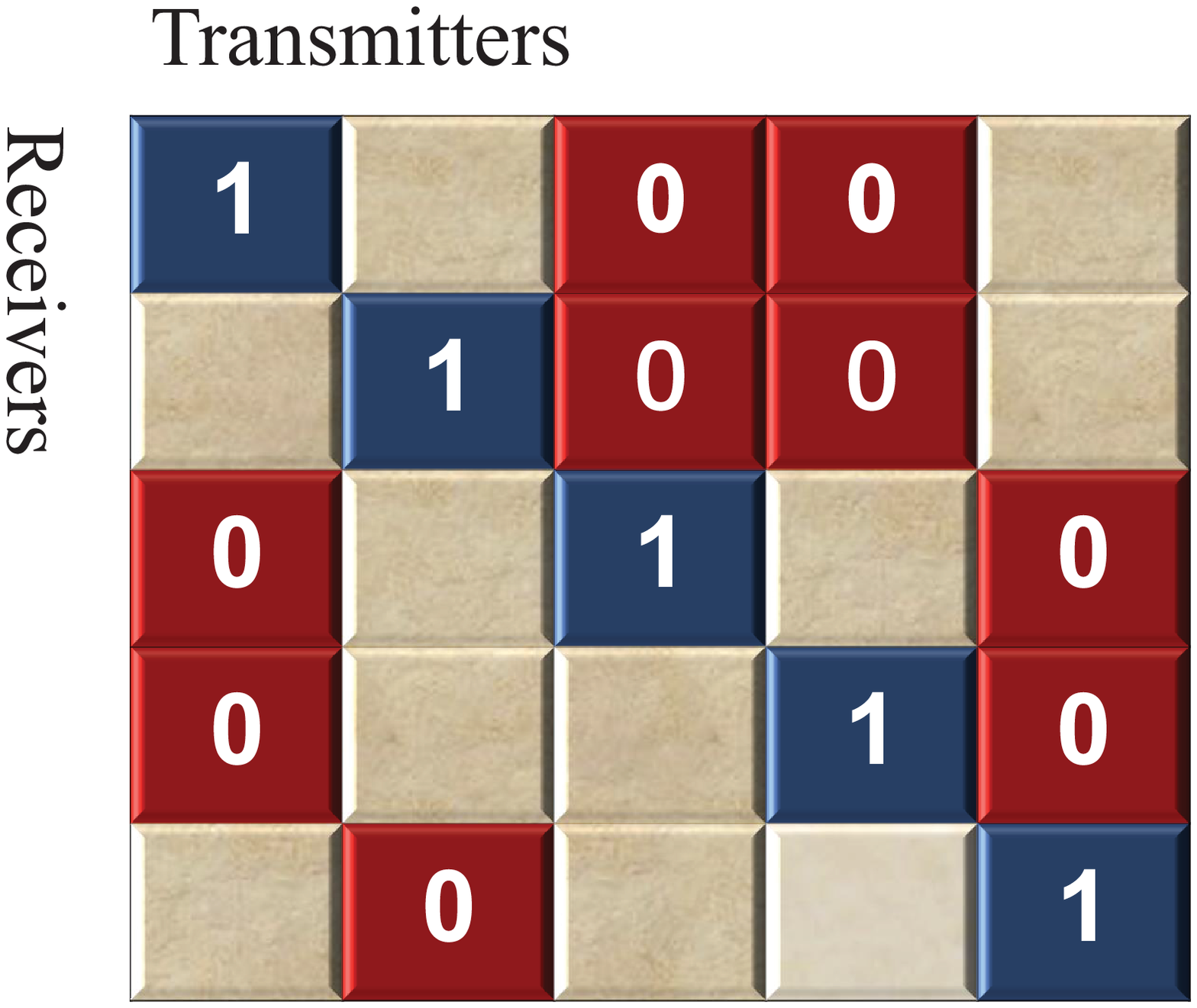}\label{fig:LRMC}}
        \caption{Matrix completion model for the topological interference alignment  without message sharing for single data stream $d_k=1$. In this case, all the diagonal entries are set to 1 as $\mathcal{S}_k=\{k\}$. As an example, the $(2,3)$-th entry is zero because the second receiver is connected with the third transmitter as interference. And the $(3,2)$-th entry of the matrix can be arbitrary value as $(3,2)\not\in\mathcal{E}$.}
        \label{fig:tim_lrmc}
\end{figure}
\subsection{Problem Analysis}
Basically, methods for solving low-rank problems can be divided into two categories. One uses convex relaxation approach and the other one uses nonconvex approach based on matrix factorization. In addition, penalty decomposition method is proposed in \cite{zhang2011penalty} for low-rank optimization problems. The inner iterations adopt a block coordinated descent method, whereas the outer iterations update the weight of rank function. However, each inner iteration requires the computation of singular value decomposition, which leads to large computation overhead ($\mathcal{O}(mnl+m^2n+m^3)$). Therefore, it is not suitable for our tranceiver design problem in ultra-dense networks.

\subsubsection{Convex Relaxation Methods}
Nuclear norm is a well-known convex proxy \cite{davenport2016overview} for rank function. 
The nuclear norm relaxation approach for problem $\mathscr{P}$ is given by
\begin{eqnarray}
\mathop{\textrm{minimize}}_{{\bs{X}}\in\mathbb{C}^{m\times n}}&& \|{\bs{X}}\|_* \nonumber\\
\textrm{subject to}&& \mathcal{A}({\bs{X}})={\bs{b}}.\label{prob:nuc}
\end{eqnarray}
It can be solved by an equivalent semidefinite programming (SDP) problem
\begin{eqnarray}
\mathop{\textrm{minimize}}_{\bs{X},\bs{W}_1,\bs{W}_2}&& \textrm{Tr}(\bs{W}_1)+\textrm{Tr}(\bs{W}_2) \nonumber\\
\textrm{subject to}&& \mathcal{A}({\bs{X}})={\bs{b}}, \label{prob:nuc} \\
&& \begin{bmatrix}
  \bs{W}_1 & \bs{X} \\ \bs{X}^{\sf{H}} & \bs{W}_2
\end{bmatrix}\succeq \bs{0} \nonumber.
\end{eqnarray} 
Unfortunately, the SDP solution requires computing singular value decomposition at each iteration, which is not scalable to large problem sizes in ultra-dense networks. Specifically, with high precision second-order interior point method, the convergence rate is fast while the computational cost for each iteration is $\mathcal{O}((mn+l)^3)$ due to computing the Newton step \cite{boyd2004convex}. Using the first-order algorithm alternating direction method of multipliers (ADMM) \cite{o2016conic}, the computational cost is $\mathcal{O}(mnl+m^2n+m^3)$ at each iteration. Furthermore, the nuclear norm relaxation approach always yields a full rank solution due to the poor structure of the affine operator $\mathcal{A}$. 
\begin{proposition}\label{proposition:nuc}
      The nuclear norm relaxation approach   (\ref{prob:nuc}) for the generalized low-rank optimization problem $\mathscr{P}$  always yields a full rank solution.
\end{proposition}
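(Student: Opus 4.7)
The plan is to exhibit an explicit feasible candidate $\bs{X}^{(0)}$ of full row rank and then, via a matching dual certificate, show that $\bs{X}^{(0)}$ attains the minimum of (\ref{prob:nuc}); this demonstrates that a full-rank solution is optimal for the nuclear norm relaxation (and in particular is the solution returned by central-path interior-point SDP solvers). For each $k\in[K]$, set $\mathcal{J}_k=\{j:(k,j)\in\mathcal{E},\,k\in\mathcal{S}_j\}$, $p_k=|\mathcal{J}_k|$, and define $\bs{X}^{(0)}\in\mathbb{C}^{m\times n}$ by $(\bs{X}^{(0)})_{kj}^{k}=p_k^{-1}\bs{I}_{d_k}$ for every $j\in\mathcal{J}_k$ and $(\bs{X}^{(0)})_{kj}^{i}=\bs{0}$ otherwise. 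Both (\ref{prob:GLRM1eq1}) and (\ref{prob:GLRM1eq2}) then hold by direct substitution, since the only nonzero blocks have message index $i=k$ and they sum to $\bs{I}_{d_k}$.

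The key structural observation is that the nonzero columns of the $k$-th row block of $\bs{X}^{(0)}$ are all labeled by message~$k$, and these column supports are pairwise disjoint across different~$k$; within each block the $d_k$ rows are mutually orthogonal with common Euclidean norm $p_k^{-1/2}$. Hence $\mathrm{rank}(\bs{X}^{(0)})=\sum_{k}d_k=m$, so $\bs{X}^{(0)}$ has full row rank, and the nonzero singular values of $\bs{X}^{(0)}$ are $p_k^{-1/2}$ each with multiplicity $d_k$, giving $\|\bs{X}^{(0)}\|_{*}=\sum_{k}d_k\,p_k^{-1/2}$.

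To certify global optimality I will construct a dual certificate $\bs{A}^{(0)}\in\mathbb{C}^{m\times n}$ by $(\bs{A}^{(0)})_{kj}^{k}=p_k^{-1/2}\bs{I}_{d_k}$ for $j\in\mathcal{J}_k$ and zero everywhere else. The same block-disjoint structure yields $\bs{A}^{(0)}(\bs{A}^{(0)})^{\sf{H}}=\bs{I}_m$, so the operator (spectral) norm of $\bs{A}^{(0)}$ equals~$1$. Because $\bs{A}^{(0)}$ takes a common value across $j\in\mathcal{J}_k$ inside every sum-constraint block and vanishes at every zero-constraint and unconstrained position, for any feasible $\bs{X}$ one obtains $\langle \bs{A}^{(0)},\bs{X}\rangle=\sum_{k}p_k^{-1/2}\mathrm{tr}\big(\sum_{j\in\mathcal{J}_k}\bs{X}_{kj}^{k}\big)=\sum_{k}d_k\,p_k^{-1/2}$. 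The duality between nuclear and operator norms then gives $\|\bs{X}\|_{*}\geq|\langle \bs{A}^{(0)},\bs{X}\rangle|=\sum_{k}d_k\,p_k^{-1/2}$, with equality attained by $\bs{X}^{(0)}$; therefore $\bs{X}^{(0)}$ is optimal for (\ref{prob:nuc}) and has full row rank.

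The main technical subtlety lies in the dual construction: $\bs{A}^{(0)}$ must be simultaneously (i)~orthogonal to every feasible direction so that $\langle \bs{A}^{(0)},\bs{X}\rangle$ is constant over the entire affine constraint set---which forces $\bs{A}^{(0)}$ to vanish on every unconstrained coordinate and to take a single common value across $j\in\mathcal{J}_k$ inside each sum-constraint block---and (ii)~of operator norm exactly~$1$. The disjointness of the row and column supports across different~$k$ is essential, since it reduces the operator norm of $\bs{A}^{(0)}$ to the maximum of per-block norms, each of which evaluates to $\sqrt{p_k}\cdot p_k^{-1/2}=1$.
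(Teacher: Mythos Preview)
Your proof is correct and reaches the same conclusion as the paper, but via a genuinely different route. Both you and the paper construct the identical candidate $\bs{X}^{(0)}$ (the paper calls it $\bs{X}^\star$, with ${\bs{x}^\star}_k^{\mathcal{D}_k}=|\mathcal{D}_k|^{-1}\bs{1}$), verify that it is feasible and full rank, and then argue optimality. The difference is in how optimality is certified. The paper takes a \emph{primal variational} route: for an arbitrary feasible direction $\bs{X}$ it differentiates $h(t)=\|\bs{X}^\star+t\bs{X}\|_*=\trace\sqrt{(\bs{X}^\star+t\bs{X})(\bs{X}^\star+t\bs{X})^{\sf H}}$ at $t=0$, using the explicit diagonal structure of $\bs{X}^\star(\bs{X}^\star)^{\sf H}$ together with $\mathrm{diag}(\bs{X}(\bs{X}^\star)^{\sf H})=\bs{0}$ to obtain $h'(0)=0$; convexity of $h$ then gives the minimum. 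Your argument is instead a \emph{dual certificate}: you exhibit $\bs{A}^{(0)}$ with $\bs{A}^{(0)}(\bs{A}^{(0)})^{\sf H}=\bs{I}_m$ (hence spectral norm~$1$) whose inner product with every feasible $\bs{X}$ equals $\sum_k d_k\,p_k^{-1/2}$, and invoke the nuclear/operator norm duality $\|\bs{X}\|_*\ge|\langle\bs{A}^{(0)},\bs{X}\rangle|$. The two are of course closely related---your $\bs{A}^{(0)}$ is precisely the (smooth) gradient of $\|\cdot\|_*$ at the full-rank point $\bs{X}^{(0)}$, so your duality inequality is the integrated form of the paper's $h'(0)=0$---but your presentation avoids differentiating the matrix square root and makes the role of the block-disjoint column supports more transparent. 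It also handles the general $d_k>1$ case directly, whereas the paper writes out only $d_k=1$ and remarks that the extension is routine. Both proofs establish only that \emph{some} optimal solution is full rank, which is all that the proposition actually needs.
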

\begin{proof}
        See Appendix \ref{append:nuc}.
\end{proof}
Therefore, the nuclear norm relaxation based approach is inapplicable for the poorly structured low-rank optimization problem $\mathscr{P}$. We thus call problem $\mathscr{P}$ as the generalized low-rank optimization problem.
\subsubsection{Nonconvex Approaches}
A rank $r$ matrix ${\bs{X}}$ can be factorized as ${\bs{X}}={\bs{L}}{\bs{R}}^{\sf{H}}$, where ${\bs{L}}\in\mathbb{C}^{m\times r}$ and ${\bs{R}}\in\mathbb{C}^{n\times r}$. Nonconvex approaches to low-rank optimization leverage matrix factorizations and design various updating strategies for two factors $\bm{U}$ and $\bm{V}$. By solving a sequence of the fixed rank least square subproblems based on matrix factorization
\begin{eqnarray}\label{GLRM_fixedrank}
\mathop{\textrm{minimize}}_{{\bs{X}}\in\mathbb{C}^{m\times n}} && f_0({\bs{X}})=\frac{1}{2}\|\mathcal{A}({\bs{X}})-{\bs{b}}\|_2^2 \nonumber\\
\textrm{subject to}&& \textrm{rank}({\bs{X}})=r,
\end{eqnarray}
and increasing $r$, we can find the minimal rank $r$ for the original problem $\mathscr{P}$.

Specifically, for the rank constrained problem (\ref{GLRM_fixedrank}) with convex objective function, the alternating minimization \cite{jain2013low} algorithm can function as follows:
\begin{eqnarray}
\bm{L}_{k+1} &=& \arg\min_{{\bs{L}}}~~ f_0(\bs{L}\bs{R}_{k}^{\sf{H}}) \label{GLRM_fixedrank_altmin1},\\
\bm{R}_{k+1} &=& \arg\min_{{\bs{R}}}~~ f_0(\bs{L}_{k+1}\bs{R}^{\sf{H}}). \label{GLRM_fixedrank_altmin2}
\end{eqnarray}
It essentially optimizes the bi-convex objective function $f_0(\bm{L}\bm{R}^{\sf{H}})$ by freezing one of ${\bs{L}}$ and ${\bs{R}}$ alternatively. However, the convergence of the alternating minimization algorithm are sensitive to initial points and its convergence rate 
can be slow. Furthermore, it may yield poor performance for achievable DoFs maximization by only converging to first-order stationary point.

In contrast, Riemannian optimization algorithms are capable 
of updating the two factors $\bm{L}$ and $\bm{R}$ 
\emph{simultaneously} by exploiting the 
quotient manifold geometry of fixed-rank matrices based 
on matrix factorization. First-order Riemannian conjugate 
gradient and second-order Riemannian trust-region algorithm 
can help find first-order stationary points and second-order 
stationary points, respectively. It has been shown in \cite{boumal2016global} that Riemannian optimization algorithms converge to first-order and second-order stationary points from arbitrary initial points. 
Furthermore, Riemannian trust-region algorithm can 
achieve high achievable DoFs with second-order stationary points 
while also enjoys locally super-linear \cite{Absil_2009optimizationonManifolds} convergence rates.

\begin{remark}
    The invariance of matrix factorization $(\bs{L}\bs{M}^{\sf{H}},\bs{M}^{-1}\bs{R})$ for any full rank matrix $\bs{M}$ makes the critical points of $f_0$ parameterized with $\bs{L}$ and $\bs{R}$ are not isolated in Euclidean space. This indeterminancy profoundly affects the convergence of second-order optimization algorithms \cite{shi2017useradmission,journee2010lowrank}. To address this issue, we shall develop efficient algorithms on the quotient manifold instead of Euclidean space.
\end{remark}

Unfortunately, available Riemannian algorithms for non-square fixed-rank matrix optimization problems only operate in $\mathbb{R}$ field \cite{Absil_2009optimizationonManifolds} and do not directly apply to
solve problem (\ref{GLRM_fixedrank}) in the complex filed. 
Inspired by the fact that the complex non-compact Stiefel manifold is well defined \cite{yatawatta2013radio}, we propose to reformulate complex matrix optimization problem (\ref{GLRM_fixedrank}) on the complex non-compact Stiefel manifold by using the Burer-Monteiro approach. 
Specifically, applying semidefinite lifting, the original problem (\ref{GLRM_fixedrank}) is equivalently reformulated into rank constrained positive semidefinite matrix optimization, before factorizing the semidefinite matrices using the Burer-Monteiro approach. The original complex matrix optimization problem (\ref{GLRM_fixedrank}) is thus reformulated as the Riemannian optimization problem over the well-defined non-compact Stiefel manifold in complex field.

\subsection{Semidefinite Lifting and the Burer-Monteiro Approach}\label{sec:liftBM}
Burer-Monteiro approach is a well-known nonconvex parameterization method for solving positive semidefinite (PSD) matrices problems \cite{burer2003nonlinear}. A rank $r$ PSD matrix ${\bs{Z}}\in\mathbb{S}^{N}$ can be factorized as ${\bs{Z}}={\bs{Y}}{\bs{Y}}^{\sf{H}}$ with ${\bs{Y}}\in\mathbb{C}^{N\times r}$. The linear operator $\mathcal{A}$ can be represented as a set of matrices $\bs{A}_i\in\mathbb{C}^{n\times m}$, i.e., 
\begin{equation}\label{eq:defA}
        \mathcal{A}({\bs{X}}) = [\langle \bs{A}_i, \bs{X}\rangle], i=1,\cdots,l.
\end{equation}
Then the objective function in  (\ref{GLRM_fixedrank}) can be rewritten as
\begin{equation}
        f_0(\bs{X})=\frac{1}{2}\sum_{i=1}^{l}|\langle \bs{A}_i, \bs{X}\rangle-b_i|^2.
\end{equation}
By semidefinite lifting \cite{ge2017no} ${\bs{X}}$ to
\begin{equation}
        {\bs{Z}}=\begin{bmatrix}
                {\bs{Z}}_{11} & {\bs{Z}}_{12} \\
                {\bs{Z}}_{21} & {\bs{Z}}_{22}
        \end{bmatrix}:=\begin{bmatrix}
                {\bs{L}}{\bs{L}}^{\sf{H}} & {\bs{L}}{\bs{R}}^{\sf{H}} \\
                {\bs{R}}{\bs{L}}^{\sf{H}} & {\bs{R}}{\bs{R}}^{\sf{H}}
        \end{bmatrix},
\end{equation}
we can reformulate problem (\ref{GLRM_fixedrank}) as a complex PSD matrix problem with rank constraint:
\begin{eqnarray}\label{GLRM_fixedrank_PSD}
\mathop{\textrm{minimize}}_{{\bs{Z}}\in\mathbb{S}_+^{N}}&& \frac{1}{2}\|\mathcal{B}({\bs{Z}})-{\bs{b}}\|_2^2  \nonumber\\
\textrm{subject to}&& \textrm{rank}({\bs{Z}})= r,
\end{eqnarray}
where $N=m+n$ and
\begin{equation}
        \mathcal{B}(\bs{Z})=\mathcal{A}(\bs{Z}_{12})=\mathcal{A}({\bs{L}}{\bs{R}}^{\sf{H}}).
\end{equation}
Here we use $\mathcal{B}$ to denote 
a set of matrices $\bs{B}_i\in\mathbb{C}^{N\times N}$
\begin{equation}\label{eq:defB}
        \bs{B}_i = \begin{bmatrix}
                \bs{0} & \bs{A}_i \\ \bs{0} & \bs{0}
        \end{bmatrix}, \langle \bs{B}_i, \bs{Z}\rangle = \langle \bs{A}_i, \bs{X}\rangle.
\end{equation}

We define ${\bs{Y}} = \begin{bmatrix}
{\bs{L}} \\ {\bs{R}}
\end{bmatrix}\in\mathbb{C}^{N\times r}$. 
The search space $\{\bm{Z}:\bm{Z}\in\mathbb{S}_+^{N},\textrm{rank}({\bs{Z}})= r\}$ admits a well-defined manifold structure, by factorizing $\bm{Z}=\bm{Y}\bm{Y}^{\sf{H}}$ based on the principles of Burer-Monteiro approach. Problem (\ref{GLRM_fixedrank_PSD}) thus can be transformed as
\begin{eqnarray}\label{GLRM_BM}
\mathop{\textrm{minimize}}_{{\bs{Y}}\in\mathbb{C}_*^{N\times r}}&& f({\bs{Y}})=\frac{1}{2}\|\mathcal{B}(\bs{Y}\bs{Y}^{\sf{H}})-\bs{b}\|_2^2.
\end{eqnarray}
This is a Riemannian optimization problem with a smooth ($C^{\infty}$) objective function over the complex \textit{non-compact Stiefel manifold} $\mathbb{C}_*^{N\times r}$, i.e., the set of all $N\times r$ full column rank matrices in complex field.

In summary, we propose to solve the generalized low-rank optimization 
problem by solving a sequence of \textit{complex} fixed-rank optimization problem using the Riemannian optimization technique. This is 
achieved by lifting the complex fixed-rank optimization problem 
into the complex positive semidefinite matrix optimization problem, 
followed by parameterizing it using the Burer-Monteior approach. 
This yields the Riemannian optimization problem over complex 
non-compact Stiefel manifold. After obtaining a solution $\bs{Y}$ from (\ref{GLRM_BM}), we can recover the solution 
$\bs{X}={\bs{L}}{\bs{R}}^{\sf{H}}$ to the original problem (\ref{GLRM_fixedrank}). The whole algorithm of addressing the transmitter cooperation problem based only on the network topology information is demonstrated in Algorithm \ref{algorithm:riemalgo}.

\SetNlSty{textbf}{}{:}
\IncMargin{1em}
\begin{algorithm}[tb]
\textbf{Input:} $\{S_j\},\mathcal{E},K,\{d_k\}$, accuracy $\varepsilon$. \\
Construct $\mathcal{B}$ and $\bm{b}$ following (\ref{prob:GLRM1eq1}) (\ref{prob:GLRM1eq2}) (\ref{eq:defA}) (\ref{eq:defB}). Let $N=m+n=(K+1)\sum_{k}d_k$.
 \\
 \For{$r=1,\cdots,N$}{
  Solve (\ref{GLRM_BM}) with Riemannian optimization algorithm.\\
  \If{$f({\bs{Y}}^{[r]})< \varepsilon$}{\Return $\bs{Y}^{[r]}$}
  }
 
 \textbf{Output:} $\bm{Y}^{[r]}$ and rank $r$.
 \caption{Optimization Framework for Transmitter Cooperation Based on Network Topology Information}
 \label{algorithm:riemalgo}
\end{algorithm}

\section{Matrix Optimization on Complex Non-compact Stiefel Manifold}\label{sec:manifold}
In this section, we shall develop Riemannian conjugate gradient and Riemannian trust-region algorithms for solving problem (\ref{GLRM_BM}). Riemannian optimization generalizes the concepts of gradient and Hessian in Euclidean space to Riemannian gradient and Hessian on manifolds. They are represented in the tangent space, which is the linearization of the search space.
\subsection{Quotient Geometry of Fixed-Rank Problem}
For problem (\ref{GLRM_BM}), the optima are not isolated because of ${\bs{Y}}{\bs{Y}}^{\sf{H}}$ remains invariant under the \textit{canonical projection} \cite[Sec 3.4.1]{Absil_2009optimizationonManifolds}
\begin{equation}\label{eq:canonicalprojection}
        \pi:~{\bs{Y}}\mapsto {\bs{Y}}{\bs{Q}}
\end{equation}
for any unitary matrix ${\bs{Q}}\in\mathcal{U}(r)$
where $\mathcal{U}(r)$ denotes the set of
$r\times r$ unitary matrices. 
To address this non-uniqueness we consider problem (\ref{GLRM_BM}) over the equivalent class
\begin{equation}
        [{\bs{Y}}] = \{{\bs{Y}}{\bs{Q}}:{\bs{Y}}\in\overline{\mathcal{M}}=\mathbb{C}_*^{N\times r},{\bs{Q}}\in\mathcal{U}(r)\},
\end{equation}
that is
\begin{eqnarray}\label{GLRM_equivalentclass}
\mathop{\textrm{minimize}}_{[{\bs{Y}}]\in\mathcal{M}}&& f([{\bs{Y}}]).
\end{eqnarray}
Then the whole set of feasible solutions can be represented by isolated points in the \textit{quotient manifold}, i.e. $\mathcal{M}=\overline{\mathcal{M}}/\sim
:=\overline{\mathcal{M}}/\mathcal{U}(r)$ with \textit{canonical projection} \cite[Sec 3.4]{Absil_2009optimizationonManifolds} $\pi$. Here $\sim$ is the equivalence relation and $\overline{\mathcal{M}}/\sim:=\{[\bs{Y}]:\bs{Y}\in\overline{\mathcal{M}}\}$. $\overline{\mathcal{M}}$ is considered as an abstract manifold.

\subsection{Riemannian Ingredients for Iterative Algorithms on Riemannian Manifolds}
By studying the unconstrained problem on the quotient manifold instead of the constrained problem in Euclidean space, Riemannian optimization can exploit the non-uniqueness of matrix factorization with Burer-Monteiro approach. We now develop conjugate gradient and trust-region algorithms on the Riemannian manifold. 
To achieve this goal, we first linearize the search space, 
by defining the concept of \textit{tangent space} \cite[Sec 3.5]{Absil_2009optimizationonManifolds} and associated ``inner product'' on the tangent space. Next, we will derive the expressions for Riemannian gradient and Riemannian Hessian in this subsection.

Specifically, tangent space $\mathcal{T}_{\bs{Y}}\overline{\mathcal{M}}$ is a vector space consisting of all tangent vectors to $\overline{\mathcal{M}}$ at $\bs{Y}$.
\begin{proposition}
        The tangent space of $\overline{\mathcal{M}}=\mathbb{C}_*^{N\times r}$ at $\bs{Y}$ is given by $\mathcal{T}_{\bs{Y}}\overline{\mathcal{M}}=\mathbb{C}^{N\times r}$.
\end{proposition}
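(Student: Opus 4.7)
The plan is to identify $\overline{\mathcal{M}} = \mathbb{C}_*^{N\times r}$ as an open submanifold of the ambient complex vector space $\mathbb{C}^{N\times r}$, and then invoke the standard fact that the tangent space of an open submanifold of a vector space at any point is canonically identified with the vector space itself. This reduces the proposition to two ingredients: a topological claim (openness) and a differential-geometric observation (tangent spaces of open subsets).

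For openness, I would argue that a matrix $\bs{Y} \in \mathbb{C}^{N\times r}$ has full column rank if and only if $\det(\bs{Y}^{\sf{H}}\bs{Y}) \neq 0$, since $\bs{Y}^{\sf{H}}\bs{Y} \in \mathbb{C}^{r\times r}$ is positive semidefinite and is nonsingular precisely when the columns of $\bs{Y}$ are linearly independent. Because the map $\bs{Y} \mapsto \det(\bs{Y}^{\sf{H}}\bs{Y})$ is continuous (indeed smooth) and $\mathbb{C}\setminus\{0\}$ is open, the preimage $\mathbb{C}_*^{N\times r}$ is an open subset of $\mathbb{C}^{N\times r}$. Consequently, $\overline{\mathcal{M}}$ inherits a smooth manifold structure from the ambient space with dimension equal to that of $\mathbb{C}^{N\times r}$.

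For the tangent space identification, I would use the standard curve-based characterization. Given any $\bs{\xi} \in \mathbb{C}^{N\times r}$ and any $\bs{Y} \in \overline{\mathcal{M}}$, the affine curve $\gamma(t) = \bs{Y} + t\bs{\xi}$ satisfies $\gamma(0) = \bs{Y}$, and by openness of $\overline{\mathcal{M}}$ there exists $\epsilon > 0$ such that $\gamma(t) \in \overline{\mathcal{M}}$ for all $|t| < \epsilon$. Hence $\bs{\xi} = \dot{\gamma}(0)$ is a tangent vector, giving the inclusion $\mathbb{C}^{N\times r} \subseteq \mathcal{T}_{\bs{Y}}\overline{\mathcal{M}}$. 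The reverse inclusion is immediate since every tangent vector to $\overline{\mathcal{M}}$ is represented as the derivative of a smooth curve in $\mathbb{C}^{N\times r}$, whose velocity lies in $\mathbb{C}^{N\times r}$. Equality follows.

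There is no serious obstacle here; the only subtle point is being careful about the rank characterization over $\mathbb{C}$ (which is why I would use $\det(\bs{Y}^{\sf{H}}\bs{Y})$ rather than the determinant of $\bs{Y}$ itself, since $\bs{Y}$ is non-square). The whole argument is essentially the observation that removing a closed set of positive codimension from a vector space does not alter the tangent space at surviving points.
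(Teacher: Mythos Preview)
Your proposal is correct and follows exactly the same approach as the paper: both recognize $\overline{\mathcal{M}}=\mathbb{C}_*^{N\times r}$ as an open submanifold of $\mathbb{C}^{N\times r}$ and conclude that its tangent space at any point is the whole ambient space. The paper's proof is a one-line citation to this standard fact, whereas you spell out the openness via $\det(\bs{Y}^{\sf{H}}\bs{Y})\neq 0$ and the curve argument, but the substance is identical.
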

\begin{proof}
        $\overline{\mathcal{M}}$ is an open submanifold \cite[Sec 3.5.2]{Absil_2009optimizationonManifolds} of $\mathbb{C}^{N\times r}$ and hence,  $\mathcal{T}_{\bs{Y}}\overline{\mathcal{M}}=\mathbb{C}^{N\times r}$ for all $\bs{Y}\in\overline{\mathcal{M}}$.
\end{proof}
In order to eliminate the non-uniqueness along the equivalent class $[\bs{Y}]$, we will decompose the 
tangent space into two orthogonal parts, i.e., \textit{vertical space} and \textit{horizontal space}. Vertical space $\mathcal{V}_{\bs{Y}}$ is the tangent space of equivalent class $[\bs{Y}]$, while horizontal space $\mathcal{H}_{\bs{Y}}$ is the orthogonal complement of vertical space in
the tangent space. That is,
\begin{equation}
        \mathcal{T}_{\bs{Y}}\overline{\mathcal{M}} = \mathcal{V}_{\bs{Y}} \oplus \mathcal{H}_{\bs{Y}},
\end{equation}
where $\oplus$ denotes the direct sum of two subspace. In this way, we can always find the unique ``lifted'' representation of the tangent vectors of $\mathcal{T}_{[\bs{Y}]}\mathcal{M}$ in $\mathcal{T}_{\bs{Y}}\overline{\mathcal{M}}$ at any element of $[\bs{Y}]$, i.e., for any $\bs{\xi}\in\mathcal{T}_{\bs{Y}}\mathcal{M}$ we define a unique horizontal lift $\overline{\bs{\xi}}\in\mathcal{H}_{\bs{Y}}$ at $\bs{Y}$ such that
\begin{equation}
         \overline{\bs{\xi}}:= \Pi_{\bs{Y}}^{h}\bs{\xi},
\end{equation}
where \textit{horizontal projection} $\Pi_{\bs{Y}}^{h}(\cdot)$ is the orthogonal projection from $\mathcal{T}_{\bs{Y}}\mathcal{M}$ onto $\mathcal{H}_{\bs{Y}}$.

\begin{proposition}\label{proposition:vertical}
The vertical space at $\bs{Y}$ is given by
        \begin{equation}\label{eq:vspace}
        \mathcal{V}_{\bs{Y}}\triangleq \{\bs{Y}\bs{\Omega}:\bs{\Omega}^{\sf{H}}=-\bs{\Omega},\bs{\Omega}\in\mathbb{C}^{r\times r} \}.
        \end{equation}
\end{proposition}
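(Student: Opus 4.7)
The plan is to identify the vertical space with the tangent space to the orbit $[\bs{Y}] = \{\bs{Y}\bs{Q}:\bs{Q}\in\mathcal{U}(r)\}$ at the point $\bs{Y}$, and then compute that tangent space explicitly by differentiating smooth curves on the unitary group.

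First I would take an arbitrary smooth curve $t\mapsto \bs{Q}(t)$ in $\mathcal{U}(r)$ with $\bs{Q}(0)=\bs{I}$, and consider the lifted curve $\gamma(t) = \bs{Y}\bs{Q}(t)$, which lies entirely inside the fiber $[\bs{Y}]$. Since $\bs{Y}\in\mathbb{C}_*^{N\times r}$ is fixed, differentiating gives $\gamma'(0) = \bs{Y}\bs{Q}'(0)$. It is a classical fact (from the Lie-algebraic structure of $\mathcal{U}(r)$, obtained by differentiating $\bs{Q}(t)^{\sf{H}}\bs{Q}(t) = \bs{I}$ at $t=0$) that the set of admissible velocities $\bs{Q}'(0)$ coincides with the skew-Hermitian matrices $\mathfrak{u}(r)=\{\bs{\Omega}\in\mathbb{C}^{r\times r}:\bs{\Omega}^{\sf{H}}=-\bs{\Omega}\}$. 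This immediately shows the inclusion $\{\bs{Y}\bs{\Omega}:\bs{\Omega}^{\sf{H}}=-\bs{\Omega}\}\subseteq \mathcal{V}_{\bs{Y}}$.

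For the reverse inclusion, I would argue by a dimension count. Every skew-Hermitian $\bs{\Omega}$ can in fact be realized as $\bs{Q}'(0)$ of some curve, for example via the exponential map $\bs{Q}(t)=\exp(t\bs{\Omega})\in\mathcal{U}(r)$. Hence the set on the right-hand side of~(\ref{eq:vspace}) consists precisely of the velocity vectors of all smooth curves in $[\bs{Y}]$ through $\bs{Y}$, which is the definition of $\mathcal{V}_{\bs{Y}}$. To verify nothing is lost, I would note that because $\bs{Y}$ has full column rank, the linear map $\bs{\Omega}\mapsto \bs{Y}\bs{\Omega}$ is injective, so the real dimension of the candidate set equals $\dim_{\mathbb{R}}\mathfrak{u}(r)=r^2$, which matches $\dim\mathcal{U}(r)$, i.e., the dimension of the fiber.

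The only subtlety, and the main place that needs care rather than being a serious obstacle, is to confirm that the action of $\mathcal{U}(r)$ on $\overline{\mathcal{M}}=\mathbb{C}_*^{N\times r}$ by right multiplication is free and proper so that the fibers are genuine embedded submanifolds diffeomorphic to $\mathcal{U}(r)$; freeness follows from the full-column-rank property of $\bs{Y}$ (since $\bs{Y}\bs{Q}=\bs{Y}$ forces $\bs{Q}=\bs{I}$), which also ensures the injectivity used above. Once this is in place, the equality in~(\ref{eq:vspace}) follows directly.
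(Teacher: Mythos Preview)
Your proposal is correct and follows essentially the same approach as the paper: both identify $\mathcal{V}_{\bs{Y}}$ with the tangent space to the orbit $[\bs{Y}]$ and compute it by differentiating curves through $\bs{Y}$. The only cosmetic difference is that the paper differentiates the invariant $\bs{Y}(t)\bs{Y}(t)^{\sf{H}}$ and then invokes a level-set argument from \cite[Sec.~3.5.7]{Absil_2009optimizationonManifolds}, whereas you differentiate the unitary constraint $\bs{Q}(t)^{\sf{H}}\bs{Q}(t)=\bs{I}$ directly and close with the exponential map and a dimension count; your route is slightly more streamlined but the substance is the same.
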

\begin{proof}
     See Appendix \ref{append:vertical}.
\end{proof}
According to the definition, horizontal space should be derived from 
\begin{equation}
        \mathcal{H}_{\bs{Y}} = \{\overline{\bs{\xi}}\in\mathcal{T}_{\bs{Y}}\overline{\mathcal{M}}:\overline{g}_{\bs{Y}}(\overline{\bs{\xi}},\overline{\bs{\zeta}})=0,~\forall \overline{\bs{\zeta}}\in\mathcal{V}_{\bs{Y}}\},
\end{equation}
where $\overline{g}$ is \textit{Riemannian metric} for the abstract manifold $\overline{\mathcal{M}}$. Riemannian metric is the generalization of ``inner product'' in Euclidean space to a manifold. It is a bilinear, symmetric positive-definite operator
\begin{equation}
        \overline{g}:~\mathcal{T}_{\bs{Y}}\overline{\mathcal{M}} \times \mathcal{T}_{\bs{Y}}\overline{\mathcal{M}} \mapsto \mathbb{R}.
\end{equation}
In this paper, we can choose
\begin{equation}\label{eq:metric}
        \overline{g}_{\bs{Y}}(\overline{\bs{\xi}},\overline{\bs{\zeta}}):=\trace(\Re(\overline{\bs{\xi}}^{\sf{H}}\overline{\bs{\zeta}}))=\frac{1}{2}\textrm{Tr}(\overline{\bs{\xi}}^{\sf{H}}\overline{\bs{\zeta}}+\overline{\bs{\zeta}}^{\sf{H}}\overline{\bs{\xi}})
\end{equation}
as a Riemannian metric for the abstract manifold $\overline{\mathcal{M}}$, where $\bs{Y}\in\overline{\mathcal{M}}$ and $\overline{\bs{\xi}},\overline{\bs{\zeta}}\in\mathcal{T}_{\bs{Y}}\overline{\mathcal{M}}$. The manifold $\overline{\mathcal{M}}$ is called a \textit{Riemannian manifold} when its tangent spaces are endowed with a Riemannian metric. From another perspective, $\overline{\mathcal{M}}$ can also be viewed as a K\"{a}hler manifold whose K\"{a}hler form is a real closed (1,1)-form \cite{barth2015compact}. 

Therefore, we can obtain the explicit expressions for the horizontal space and horizontal projection.
\begin{proposition}\label{proposition:horizontal}
        The horizontal space is
        \begin{equation}\label{eq:hspace}
                 \mathcal{H}_{\bs{Y}}=\{\bs{\xi}\in\mathbb{C}^{N\times r}:\bs{\xi}^{\sf{H}}\bs{Y}=\bs{Y}^{\sf{H}}\bs{\xi}\},
         \end{equation}
         and the orthogonal projection onto the horizontal space is 
        \begin{equation}\label{eq:hproj}
        \Pi_{\bs{Y}}^h \bs{\xi}_{\bs{Y}} =  \bs{\xi}_{\bs{Y}}- \bs{Y}\bs{\Omega},
\end{equation}
where $\bs{\Omega}^{\sf{H}}=-\bs{\Omega}\in\mathbb{C}^{r\times r}$ is the solution of Lyapunov equation
\begin{equation}\label{eq:lyap}
        \bs{Y}^{\sf{H}}\bs{Y}\bs{\Omega}+\bs{\Omega}\bs{Y}^{\sf{H}}\bs{Y}=\bs{Y}^{\sf{H}}\bs{\xi}_{\bs{Y}}-\bs{\xi}_{\bs{Y}}^{\sf{H}}\bs{Y}.
\end{equation}
\end{proposition}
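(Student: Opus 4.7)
The plan is to derive both claims directly from the definitions: $\mathcal{H}_{\bs{Y}}$ is the orthogonal complement of $\mathcal{V}_{\bs{Y}}$ with respect to the Riemannian metric (\ref{eq:metric}), and $\Pi_{\bs{Y}}^h\bs{\xi}_{\bs{Y}}$ is the component of $\bs{\xi}_{\bs{Y}}$ orthogonal to $\mathcal{V}_{\bs{Y}}$. Throughout, I will use the explicit form $\mathcal{V}_{\bs{Y}}=\{\bs{Y}\bs{\Omega}:\bs{\Omega}^{\sf H}=-\bs{\Omega}\}$ from Proposition \ref{proposition:vertical}.

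First I would characterize $\mathcal{H}_{\bs{Y}}$. A tangent vector $\bs{\xi}\in\mathbb{C}^{N\times r}$ lies in $\mathcal{H}_{\bs{Y}}$ iff $\overline{g}_{\bs{Y}}(\bs{\xi},\bs{Y}\bs{\Omega})=0$ for every skew-Hermitian $\bs{\Omega}\in\mathbb{C}^{r\times r}$. Expanding (\ref{eq:metric}) and using cyclicity of the trace, this reduces to
\begin{equation*}
\tfrac{1}{2}\,\Re\!\left(\trace\!\big((\bs{\xi}^{\sf H}\bs{Y}-\bs{Y}^{\sf H}\bs{\xi})\bs{\Omega}\big)\right)=0 \quad \forall\,\bs{\Omega}^{\sf H}=-\bs{\Omega}.
\end{equation*}
Writing $\bs{B}:=\bs{\xi}^{\sf H}\bs{Y}-\bs{Y}^{\sf H}\bs{\xi}$, which is automatically skew-Hermitian, one checks that $\trace(\bs{B}\bs{\Omega})$ is in fact real for skew-Hermitian $\bs{\Omega}$ (the conjugate transpose of $\bs{B}\bs{\Omega}$ equals $\bs{\Omega}\bs{B}$, whose trace agrees with $\trace(\bs{B}\bs{\Omega})$). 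Choosing $\bs{\Omega}=\bs{B}$ then yields $\trace(\bs{B}^2)=-\|\bs{B}\|_F^2=0$, forcing $\bs{B}=\bs{0}$. The reverse implication is immediate, so (\ref{eq:hspace}) follows.

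Next I would derive the projection formula. Decomposing $\bs{\xi}_{\bs{Y}}=\Pi_{\bs{Y}}^h\bs{\xi}_{\bs{Y}}+\bs{Y}\bs{\Omega}$ with $\bs{\Omega}^{\sf H}=-\bs{\Omega}$ (using Proposition \ref{proposition:vertical}) and enforcing the horizontal condition (\ref{eq:hspace}) on $\bs{\xi}_{\bs{Y}}-\bs{Y}\bs{\Omega}$, a direct expansion together with $\bs{\Omega}^{\sf H}=-\bs{\Omega}$ collapses to the Lyapunov equation (\ref{eq:lyap}). Since $\bs{Y}\in\mathbb{C}_*^{N\times r}$ has full column rank, $\bs{Y}^{\sf H}\bs{Y}\succ\bs{0}$, so the Lyapunov operator $\bs{\Omega}\mapsto\bs{Y}^{\sf H}\bs{Y}\bs{\Omega}+\bs{\Omega}\bs{Y}^{\sf H}\bs{Y}$ has spectrum equal to the pairwise sums of eigenvalues of $\bs{Y}^{\sf H}\bs{Y}$, which are strictly positive. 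Hence the operator is invertible on the space of skew-Hermitian matrices (one also verifies the right-hand side $\bs{Y}^{\sf H}\bs{\xi}_{\bs{Y}}-\bs{\xi}_{\bs{Y}}^{\sf H}\bs{Y}$ is skew-Hermitian, so the solution remains in that subspace), giving a unique $\bs{\Omega}$ and therefore the projection (\ref{eq:hproj}).

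The main obstacle I anticipate is the first step: the Riemannian metric is real-valued on a complex vector space, so the orthogonality condition is only about the real part of a complex trace pairing. Carefully exploiting the skew-Hermitian structure of $\bs{B}$ to show the pairing is already real and then picking $\bs{\Omega}=\bs{B}$ to force $\|\bs{B}\|_F=0$ is the cleanest route; a naive attempt to equate $\trace(\bs{B}\bs{\Omega})=0$ as a complex equation for all $\bs{\Omega}$ would fail, since only the real part is constrained.
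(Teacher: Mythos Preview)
Your proposal is correct and follows essentially the same approach as the paper: characterize $\mathcal{H}_{\bs{Y}}$ via orthogonality to $\mathcal{V}_{\bs{Y}}$ under the metric (\ref{eq:metric}), then obtain the projection by writing $\bs{\xi}_{\bs{Y}}=(\bs{\xi}_{\bs{Y}}-\bs{Y}\bs{\Omega})+\bs{Y}\bs{\Omega}$ and imposing the horizontal condition. Your write-up is in fact more careful than the paper's in two places: the paper simply asserts that $\trace((\bs{\xi}^{\sf H}\bs{Y}-\bs{Y}^{\sf H}\bs{\xi})\bs{\Omega})=0$ for all skew-Hermitian $\bs{\Omega}$ forces $\bs{\xi}^{\sf H}\bs{Y}=\bs{Y}^{\sf H}\bs{\xi}$, whereas you justify this by noting $\bs{B}$ is itself skew-Hermitian, the pairing is real, and the choice $\bs{\Omega}=\bs{B}$ gives $-\|\bs{B}\|_F^2=0$; and the paper does not address existence and uniqueness of the Lyapunov solution, while you invoke $\bs{Y}^{\sf H}\bs{Y}\succ\bs{0}$.
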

\begin{proof}
        See Appendix \ref{append:horizontal}.
\end{proof}

Given the Riemannian metric for the abstract manifold, quotient manifold is naturally endowed with a Riemannian metric
\begin{equation}
        g_{[\bs{Y}]}(\bs{\xi}_{[\bs{Y}]},\bs{\zeta}_{[\bs{Y}]}) : = \overline{g}_{\bs{Y}}(\overline{\bs{\xi}}_{\bs{Y}},\overline{\bs{\zeta}}_{\bs{Y}})
\end{equation}
such that the expression $\overline{g}_{\bs{Y}}(\overline{\bs{\xi}}_{\bs{Y}},\overline{\bs{\zeta}}_{\bs{Y}})$ remains for any elements in the equivalent class $[\bs{Y}]$. 
Hence $\mathcal{M}$ is a \textit{Riemannian quotient manifold} of the abstract manifold $\overline{\mathcal{M}}$ with the Riemannian metric $g$, and the canonical projection $\pi:(\overline{\mathcal{M}},\overline{g})\mapsto (\mathcal{M},g)$ is a \textit{Riemannian submersion} \cite[Sec 3.6.2]{Absil_2009optimizationonManifolds}.

Riemannian optimization generalizes the gradient and Hessian into Riemannian gradient and Riemannian Hessian.

\begin{table*}[ht]
\centering
        \begin{tabular}{c|c r}
        {$\mathop{\textrm{minimize}}_{[{\bs{Y}}]\in\overline{\mathcal{M}}}~ f({\bs{Y}})$} & &  \\\hline
        Computation space  & $\overline{\mathcal{M}}=C_*^{N\times r},\mathcal{M}=\overline{\mathcal{M}}/\sim $ &  \\
        Canonical projection & $\pi:\bs{Y}\mapsto \bs{Y}\bs{Q},{\bs{Q}}\in\mathcal{U}(r)\qquad $ &  (\ref{eq:canonicalprojection}) \\
        Remannian metric & $g_{\bs{Y}}(\bs{\xi}_{\bs{Y}},\bs{\eta}_{\bs{Y}}) = \textrm{Tr}(\bs{\xi}_{\bs{Y}}^{\sf{H}}\bs{\eta}_{\bs{Y}}+\bs{\eta}_{\bs{Y}}^{\sf{H}}\bs{\xi}_{\bs{Y}})\qquad$ & (\ref{eq:metric}) \\
        Vertical space & $\mathcal{V}_{\bs{Y}}=\{\bs{Y}\bs{\Omega}:\bs{\Omega}^{\sf{H}}=-\bs{\Omega},\bs{\Omega}\in\mathbb{C}^{r\times r} \}\qquad$ & (\ref{eq:vspace}) \\
        Horizontal space  & $\mathcal{H}_{\bs{Y}}=\{\bs{\xi}\in\mathbb{C}^{N\times r}:\bs{\xi}^{\sf{H}}\bs{Y}=\bs{Y}^{\sf{H}}\bs{\xi}\}\qquad$ & (\ref{eq:hspace})  \\
        Projection onto horizontal space &  $ \Pi_{\mathcal{M}}^h\bs{\xi}_{\bs{Y}}=\bs{\xi}_{\bs{Y}}- \bs{Y}\bs{\Omega}\qquad$ &(\ref{eq:hproj}) \\
        Remannian gradient & $\textrm{grad}f(\bs{Y})=\sum_{i=1}^{l}(C_i\bs{B}_i+C_i^*\bs{B}_i^{\sf{H}})\bs{Y}\qquad$ & (\ref{eq:finalrgrad}) \\
        Remannian Hessian & 
         $\textrm{Hess}f(\bs{Y})[\bs{\eta}_{\bs{Y}}]\qquad$ & (\ref{eq:finalrhess}) \\
        Retraction & $\mathcal{R}_{\bs{Y}}(\bs{\xi})=\pi(\bs{Y}+\overline{\bs{\xi}})
        \qquad
        $ &(\ref{eq:retraction})
        \end{tabular}
        \caption{Riemannian ingredients}
        \label{tab:riemannian_ingredients}
\end{table*}

\subsubsection{Riemannian Gradient} Riemannian gradient is a necessary ingredient to develop the Riemannian conjugate gradient and Riemannian trust-region algorithm. For quotient manifold, the horizontal representation of Riemannian gradient, denoted by $\textrm{grad}f(\bs{Y})$, arises from
\begin{equation}\label{eq:rgrad}
        \textrm{grad}f(\bs{Y})=\Pi_{\bs{Y}}^h \overline{\textrm{grad}}f(\bs{Y}),
\end{equation}
in which $\overline{\textrm{grad}}f(\bs{Y})$ is the Riemannian gradient in the abstract manifold $\overline{\mathcal{M}}$ at $\bs{Y}$. Note that $\overline{\textrm{grad}}f (\bs{Y})$ is given by
\begin{equation}\label{eq:gradbar}
       \overline{g}_{\bs{Y}}(\overline{\textrm{grad}}f(\bs{Y}),\overline{\bs{\xi}})=\textrm{D}f(\bs{Y})[\overline{\bs{\xi}}],~\forall \overline{\bs{\xi}}\in\mathcal{T}_{\bs{Y}}\overline{\mathcal{M}},
\end{equation}
where $\textrm{D}f(\bs{Y})[\bs{\xi}]:=\lim_{t\rightarrow 0} t^{-1} \left[ f(\bs{Y}+t\bs{\xi})-f(\bs{Y})\right]$ is the directional derivative of $f$, whereas $\overline{\bs{\xi}}$ is the horizontal lift of $\bs{\xi}$. Then we conclude that
\begin{equation}\label{eq:finalrgrad}
        \textrm{grad}f(\bs{Y})=\overline{\textrm{grad}}f(\bs{Y})=\sum_{i=1}^{l}(C_i\bs{B}_i+C_i^*\bs{B}_i^{\sf{H}})\bs{Y},
\end{equation}
in which $C_i=\langle \bs{B}_i,\bs{Y}\bs{Y}^{\sf{H}} \rangle-b_i$. The derivation process is described in detail in Appendix \ref{append:gradhess}.

\subsubsection{Riemannian Hessian}
For the purpose of developing a second-order algorithm, we need to think of the Riemannian Hessian as an linear operator closely connected to the directional derivative of the gradient. \textit{Riemannian connection} defines a ``directional derivative'' on the Riemannian manifold. To be specific, Euclidean directional derivative is a Riemannian connection on $\mathbb{C}^{N\times r}$. Since the quotient manifold $\mathcal{M}$ 
has a Riemannian metric that is invariant along the horizontal space, the Riemannian connection \cite[Proposition 5.3.4]{Absil_2009optimizationonManifolds} can be derived from
\begin{equation}
       \nabla_{\bm{\eta}}\bm{\xi} = \Pi_{\bm{Y}}^{h}(\textrm{D}\overline{\bm{\xi}}[\overline{\bm{\eta}}]), 
\end{equation}
for any $\bm{\eta}\in \mathcal{V}_{\bs{Y}},\; \bm{\xi}\in\mathfrak{X}(\mathcal{M})$ and $\mathfrak{X}(\mathcal{M})$ is the set of smooth vector fields on $\mathcal{M}$. The horizontal representation of Riemannian Hessian operator \cite[Definition 5.5.1]{Absil_2009optimizationonManifolds} is given as
\begin{equation}\label{eq:rhess}
        \textrm{Hess}f(\bs{Y})[\bs{\xi}_{\bs{Y}}]:=\nabla_{\bs{\xi}_{\bs{Y}}}\overline{\textrm{grad}}f.
\end{equation}
Then the Riemannian Hessian is given by
\begin{align}\label{eq:finalrhess}
        \textrm{Hess}f(\bs{Y})[\bs{\eta}_{\bs{Y}}]=\Pi_{\bs{Y}}^{h}\biggl(&\sum_{i=1}^{l}({C_{\bs{\eta}}}_{i}\bs{B}_i\bs{Y} +C_i\bs{B}_i\bs{\eta}_Y \nonumber\\&+ {C_{\bs{\eta}}}_{i}^*\bs{B}_i^{\sf{H}}\bs{Y} +C_i^*\bs{B}_i^{\sf{H}}\bs{\eta}_\bs{Y})\biggr),
\end{align}
where ${C_{\bs{\eta}}}_{i}=\langle \bs{B}_i,\bs{Y}\bs{\eta}_\bs{Y}^{\sf{H}} +\bs{\eta}_\bs{Y}\bs{Y}^{\sf{H}}\rangle$.
We relegate the derivation details of this expression to Appendix \ref{append:gradhess}.

\subsection{Riemannian Optimization for Fixed-Rank Problem} 
Riemannian optimization generalizes the optimization algorithms in Euclidean space to a manifold. Similarly, we need to compute search directions in the tangent space and appropriate stepsizes. To ensure each iteration is always on the given manifold, \textit{retraction} \cite[Sec 4.1]{Absil_2009optimizationonManifolds} is defined as a pull-back from the tangent space onto the manifold. To be specific, the updating formula in the $i$-th iteration is given by
\begin{equation}
        \bs{Y}_{k+1} = \mathcal{R}_{\bs{Y}_k}(\alpha_k\bs{\eta}_{k}),
\end{equation}
where $\alpha_k>0$ is the step size, $\bs{\eta}_k \in \mathcal{T}_{\bs{Y}_k}\mathcal{M}$ is the search direction, and $\mathcal{R}$ denotes retraction operation which maps an element from the set of all tangent spaces $\mathcal{T}\mathcal{M}=\cup_{\bs{Y}\in\mathcal{M}}\mathcal{T}_{\bs{Y}}\mathcal{M}$ to $\mathcal{M}$. The retraction operation is shown in Fig. \ref{fig:retraction}.
\begin{proposition}
Choices of $\overline{\mathcal{R}}$ and $\mathcal{R}$
\begin{equation}\label{eq:retraction}
        \overline{\mathcal{R}}_{\bs{Y}}(\overline{\bs{\xi}}): = \bs{Y}+\overline{\bs{\xi}},~~
        \mathcal{R}_{\bs{Y}}(\bs{\xi}):=\pi(\overline{\mathcal{R}}_{\bs{Y}}(\overline{\bs{\xi}}))
\end{equation}
define retractions on $\overline{\mathcal{M}}$ and $\mathcal{M}$, respectively.
\end{proposition}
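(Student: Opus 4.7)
The plan is to verify the two defining axioms of a retraction---centering and local rigidity---for $\overline{\mathcal{R}}$ on the abstract manifold $\overline{\mathcal{M}}$, and then transfer the result through the Riemannian submersion $\pi$ to obtain a retraction on the quotient $\mathcal{M}$, following the general principle for quotient retractions in \cite{Absil_2009optimizationonManifolds}. Smoothness of both maps is essentially for free, since $\overline{\mathcal{R}}$ is just vector-space addition and $\pi$ is smooth by construction of the quotient.

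For $\overline{\mathcal{R}}_{\bs{Y}}(\overline{\bs{\xi}})=\bs{Y}+\overline{\bs{\xi}}$, the first step is to note that $\overline{\mathcal{M}}=\mathbb{C}_*^{N\times r}$ is open in $\mathbb{C}^{N\times r}$, because full column rank is an open condition (the simultaneous vanishing of every $r\times r$ minor defines a closed algebraic set). Hence $\bs{Y}+\overline{\bs{\xi}}\in\overline{\mathcal{M}}$ for all $\overline{\bs{\xi}}\in\mathcal{T}_{\bs{Y}}\overline{\mathcal{M}}=\mathbb{C}^{N\times r}$ sufficiently small, so $\overline{\mathcal{R}}_{\bs{Y}}$ is well defined on a neighbourhood of the origin of $\mathcal{T}_{\bs{Y}}\overline{\mathcal{M}}$. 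Centering is immediate from $\overline{\mathcal{R}}_{\bs{Y}}(0)=\bs{Y}$, and local rigidity reduces to computing $\frac{d}{dt}(\bs{Y}+t\overline{\bs{\xi}})|_{t=0}=\overline{\bs{\xi}}$, i.e.\ $D\overline{\mathcal{R}}_{\bs{Y}}(0)=\mathrm{id}$ on $\mathcal{T}_{\bs{Y}}\overline{\mathcal{M}}$.

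For the quotient retraction $\mathcal{R}_{[\bs{Y}]}(\bs{\xi})=\pi(\bs{Y}+\overline{\bs{\xi}})$, the central step is to check that the right-hand side is independent of the representative chosen in $[\bs{Y}]$. The key observation is the $\mathcal{U}(r)$-equivariance of the horizontal lift: if $\bs{Y}'=\bs{Y}\bs{Q}$ with $\bs{Q}\in\mathcal{U}(r)$, then the characterization $\bs{\xi}^{\sf{H}}\bs{Y}=\bs{Y}^{\sf{H}}\bs{\xi}$ in Proposition~\ref{proposition:horizontal} is preserved under right multiplication by $\bs{Q}$, which yields $\overline{\bs{\xi}}_{\bs{Y}'}=\overline{\bs{\xi}}_{\bs{Y}}\bs{Q}$. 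Consequently $\bs{Y}'+\overline{\bs{\xi}}_{\bs{Y}'}=(\bs{Y}+\overline{\bs{\xi}}_{\bs{Y}})\bs{Q}$, and the canonical projection collapses this $\mathcal{U}(r)$ action, giving the desired independence. Centering on the quotient is then immediate, and local rigidity follows because $\pi$ is a Riemannian submersion, so $D\pi(\bs{Y})\colon\mathcal{H}_{\bs{Y}}\to\mathcal{T}_{[\bs{Y}]}\mathcal{M}$ is a linear isomorphism, and the chain rule yields $D\mathcal{R}_{[\bs{Y}]}(0)[\bs{\xi}]=D\pi(\bs{Y})[\overline{\bs{\xi}}]=\bs{\xi}$.

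The main obstacle is the well-definedness step, as it is the only place where the specific geometry of $\overline{\mathcal{M}}$ (through its $\mathcal{U}(r)$ action and the explicit horizontal-space description in Proposition~\ref{proposition:horizontal}) enters; all remaining items are either immediate consequences of $\overline{\mathcal{M}}$ being an open submanifold of a vector space, or formal applications of the submersion property of $\pi$.
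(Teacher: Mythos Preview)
Your proof is correct and rests on the same two facts as the paper's: the openness of $\overline{\mathcal{M}}=\mathbb{C}_*^{N\times r}$ in its ambient vector space for the total-space retraction, and the $\mathcal{U}(r)$-equivariance (linearity of the group action) for the descent to the quotient. The only difference is stylistic: the paper invokes the general constructions of \cite[Sec.~4.1.1--4.1.2]{Absil_2009optimizationonManifolds} directly, whereas you verify the retraction axioms and the representative-independence by hand.
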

\begin{proof}
Since $\overline{\mathcal{M}}=\mathbb{C}_*^{N\times r}$ is an embedded manifold and also an open submanifold of $\mathcal{E}=\mathbb{C}^{N\times r}$, following \cite[Sec 4.1.1]{Absil_2009optimizationonManifolds} we can choose the identity mapping
\begin{equation}
        \phi(\bs{F}) = \bs{F}
\end{equation}
as a diffeomorphism so that $\phi:\overline{\mathcal{M}} \times \overline{\mathcal{N}}\rightarrow  \mathcal{E}_*,\overline{\mathcal{N}}=\emptyset$ and $\textrm{dim}(\overline{\mathcal{M}})+\textrm{dim}(\overline{\mathcal{N}})=\textrm{dim}(\mathcal{E})$. Therefore, we conclude that
\begin{equation}
        \overline{\mathcal{R}}_{\bs{Y}}(\overline{\bs{\xi}}): =\bs{Y}+\overline{\bs{\xi}}
\end{equation}
defines a retraction on $\overline{\mathcal{M}}$. Adding with that equivalent classes are orbits of the Lie group $\mathcal{U}_r$ which acts linearly \cite[Sec 4.1.2]{Absil_2009optimizationonManifolds} on the abstract manifold $\overline{\mathcal{M}}$,
\begin{equation}
        \mathcal{R}_{\bs{Y}}(\bs{\xi}):=\pi(\overline{\mathcal{R}}_{\bs{Y}}(\overline{\bs{\xi}}))
\end{equation}
defines a retraction on $\mathcal{M}$.
\end{proof}
In this subsection, we will introduce Riemannian conjugate gradient (RCG) method and Riemannian trust-region (RTR) method.
\subsubsection{Riemannian Conjugate Gradient Method}\label{sec:RCG}
When the search direction is chosen as the negative Riemannian gradient and the step size is determined by backtracking line search following the Armijo rule \cite[4.6.3]{Absil_2009optimizationonManifolds}, we have the Riemannian gradient descent algorithm. Riemannian conjugate gradient method can be expressed as
\begin{equation}
        \bs{\eta}_{k+1} = -\textrm{grad}f_k+\beta_k \mathfrak{T}_{\alpha_k\bs{\eta}_{k}}(\bs{\eta}_{k}),
\end{equation}
where $\mathfrak{T}_{\bs{\eta}_{\bs{Y}}}(\bs{\xi}_{\bs{Y}})$ is the \textit{vector transport} operator so that $\bs{\xi}_{\bs{Y}}$ is transported from $\mathcal{T}_{\bs{Y}}\mathcal{M}$ to $\mathcal{T}_{\mathcal{R}_{\bs{Y}}(\bs{\eta}_{\bs{Y}})}\mathcal{M}$ for $\bs{\xi}_{\bs{Y}}\in\mathcal{T}_{\bs{Y}}\mathcal{M}$. This is shown in Fig. \ref{fig:transport}.
\begin{figure}[t]
        \centering
        \subfloat[Retraction]{\includegraphics[width=2.2in]{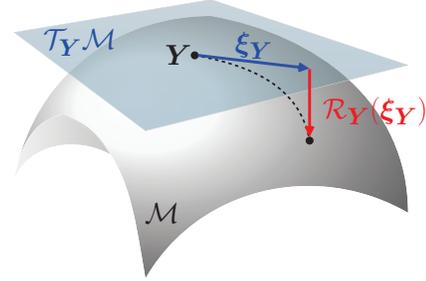}\label{fig:retraction}}\hfil
        \subfloat[Vector Transport]{\includegraphics[width=2.2in]{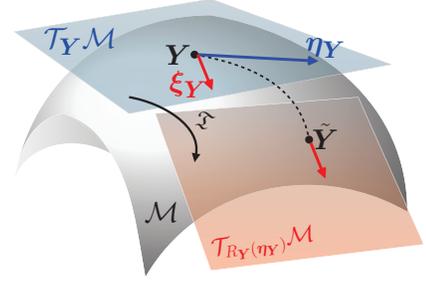}\label{fig:transport}}
        \caption{Riemannian retraction and vector transport operation.}
\end{figure}

A vector transport is defined by
\begin{equation}\label{eq:transport}
        \mathfrak{T}_{\bs{\eta}_{\bs{Y}}}(\bs{\xi}_{\bs{Y}}):=D\mathcal{R}_{\bs{Y}}(\bs{\eta}_{\bs{Y}})[\bs{\xi}_{\bs{Y}}]=\bs{\xi}_{\bs{Y}}.
\end{equation}
Among many good choices for $\beta_k$, we choose 
\begin{equation}
        \beta_{k} =\frac{g_{\bs{Y}_k}(\textrm{grad}f_k,\textrm{grad}f_k-\mathfrak{T}_{\alpha_{k-1}\bs{\eta}_{k-1}}(\textrm{grad}f_{k-1}))}{g_{\bs{Y}_k}(\bs{\eta}_{k-1},\textrm{grad}f_k-\mathfrak{T}_{\alpha_{k-1}\bs{\eta}_{k-1}}(\textrm{grad}f_{k-1}))}, \nonumber
\end{equation}
which is a generalized version of Hestenes-Stiefel \cite{hestenes1952methods}. 

\subsubsection{Riemannian Trust-Region Algorithm}\label{sec:RTR}
When the search direction is chosen by solving the local second-order approximation of $f(\bs{Y})$, it results in the Riemannian trust-region algorithm. We will find the updating vector $\bs{\eta}$ by solving the trust-region subproblem
\begin{eqnarray}
        \mathop{\textrm{minimize}}_{\bs{\eta}\in\mathcal{T}_{\bs{Y}_k}\mathcal{M}}&&m_k(\bs{\eta})=f_k+g_{\bs{Y}_k}(\textrm{grad}f_k,\bs{\eta}) +\frac{1}{2}g_{\bs{Y}_k}(\textrm{Hess}f_k[\bs{\eta}],\bs{\eta}) \nonumber\\
        \textrm{subject to}&&\|\bs{\eta}\|_g\leq \Delta_k, \label{prob:tr_sub}
\end{eqnarray}
where $\Delta_k>0$ is the radius of the trust region and $\|\bs{\eta}\|_g=\sqrt{g_{\bs{Y}_k}(\bs{\eta},\bs{\eta})}$. Note that the solution $\bs{\eta}$ becomes a candidate for updating. This is because we will select a proper $\Delta_k$, find the corresponding solution $\bs{\eta}_k$ of the trust-region subproblem and then update $\bs{Y}$ through
\begin{equation}
     \bs{Y}_{k+1} = \mathcal{R}_{\bs{Y}_k}(\bs{\eta}_{k}).
\end{equation}
The criterion for choosing $\Delta_k$ is based on evaluating
\begin{equation}
        \rho_k = \frac{f_k-f(\mathcal{R}_{\bs{Y}_k}(\bs{\eta}_k))}{m_k(\bs{0})-m_k(\bs{\eta}_k)}.
 \end{equation} 
 When $\rho_k$ is very small, the radius of trust region $\Delta_k$ should be reduced because in this case the second-order approximation is too inaccurate. If $\rho_k$ is not very small, we shall accept $\Delta_k$ and $\bs{\eta}_k$ and reduce the trust region. If $\rho_k$ is close to 1, it m
 means that the second-order approximation models original objective function well. Hence, we can accept this step and expand
the trust region. Likewise, if $\rho_k\ll 1$, we should also reject this step and increase $\Delta_k$. The trust-region subproblem can be solved by the truncated conjugate gradient \cite[Sec 7.3.2]{Absil_2009optimizationonManifolds} algorithm (see Algorithm \ref{algorithm:tcg}).
 \SetNlSty{textbf}{}{:}
\IncMargin{1em}
\begin{algorithm}[tb]
\textbf{Input:} $\mathcal{B},\bm{b},\bs{Y}_k,\Delta_k$. Parameters $\kappa,\theta>0$\\
 Initialize: $\bm{\eta}^{0},\bs{r}_0=\textrm{grad}f_k,\bs{\delta}_0=-\bs{r}_0$\\
 \While{$\|\bs{r}_{j+1}\|_{g}>\|\bs{r}_{0}\|_{g}\min(\|\bs{r}_{0}\|_{g}^{\theta},\kappa)$}{
\If{$g_{\bs{Y}_k}(\bs{\delta}_j,\textrm{Hess}f_k[\bs{\delta}_j])\leq 0 $}
{Compute $\tau=\arg\min m_k(\bs{\eta}_k)$ where $\bs{\eta}_k=\bs{\eta}^j+\tau\bs{\delta}_j$ and $\|\bs{\eta}_k\|_g= \Delta_k$,\\
\Return $\bs{\eta}$}
Set $\bs{\eta}^{j+1}=\bs{\eta}^j+\alpha_j\bs{\delta}_j$ where $\alpha_j=\|\bs{r}_j\|_g^2/g_{\bs{Y}_k}(\bs{\delta}_j,\textrm{Hess}f_k[\bs{\delta}_j])$\\
\If{$\|\bs{\eta}^{j+1}\|_g\geq {\Delta}_k $}
{Set $\tau$ as the solution to $\|\bs{\eta}_k\|_g= \Delta_k$ where $\bs{\eta}_k=\bs{\eta}^j+\tau\bs{\delta}_j$,\\
\Return $\bs{\eta}_k$}
Set $\bs{r}_{j+1}=\bs{r}_j+\alpha_j\textrm{Hess}f_k[\bs{\delta}_j]$,\\
Set $\beta_{j+1}=\|\bs{r}_{j+1}\|_{g}^2/\|\bs{r}_{j}\|_{g}^2$,\\
Set $\bs{\delta}_{j+1}=-\bs{r}_{j+1}+\beta_{j+1}\bs{\delta}_j$,\\
$j\leftarrow j+1$
}
 \textbf{Output:} $\bs{\eta}=\bs{\eta}_k$.
 \caption{Truncated Conjugate Gradient Algorithm for (\ref{prob:tr_sub})}
 \label{algorithm:tcg}
\end{algorithm}

Riemannian trust-region algorithm harnesses the second-order information of the problem. It admits a superlinear \cite[Theorem 7.4.11]{Absil_2009optimizationonManifolds} convergence rate locally and is robust to initial points. Since the objective function $f$ is exactly a quadratic function which satisfies the Lipschitz gradient condition and other assumptions in \cite{boumal2016global}, we can always find an approximate second-order critical points by the Riemannian trust-region algorithm.

\subsection{Computational Complexity Analysis}
Riemannian conjugate gradient and Riemannian trust-region algorithm involve computing optimization ingredients at each iteration, for which we show their computational complexity.
\begin{itemize}
        \item Evaluate the objective value $f(\bs{Y})$. Since $\mathcal{B}(\cdot)$ involves a series of sparse matrix multiplication, we can compute it efficiently and the complexity of computing $f(\bs{Y})$ is $\mathcal{O}(mnl)$.
        \item Compute the Riemannian gradient $\textrm{grad}f$ (\ref{eq:finalrgrad}). This includes computing matrix multiplication in $\mathcal{O}(mnl)$ and horizontal projection $\Pi_{\bs{Y}}^h$ (\ref{eq:hproj}). Since complexity of solving the Lyapunov equation (\ref{eq:lyap}) is $\mathcal{O}(r^3+(m+n)r^2)$,  the overall complexity is $\mathcal{O}(mnl+(m+n)r^2+r^3)$.
        \item Compute the Riemannian Hessian $\textrm{Hess}f$ (\ref{eq:finalrhess}). Its cost is also $\mathcal{O}(mnl+(m+n)r^2+r^3)$.
        \item Computing the Riemannian metric $\overline{g}$ (\ref{eq:metric}). This complexity is dominant by matrix multiplications, which is $\mathcal{O}((m+n)r^2)$.
        \item Computational complexity of retraction $\overline{\mathcal{R}}$ (\ref{eq:retraction}) is $\mathcal{O}((m+n)r)$ and vector transport $\mathfrak{T}$ (\ref{eq:transport}) is insignificant.
\end{itemize}
From the above results, we conclude that the computational complexity of Riemannian conjugate gradient algorithm for each iteration is $\mathcal{O}(mnl+(m+n)r^2+r^3)$. And each iteration of truncated conjugate gradient algorithm also involves computation with complexity $\mathcal{O}(mnl+(m+n)r^2+r^3)$.

\section{Simulations} \label{sec:simulation}
This section presents numerical experiments to demonstrate the efficacy of the generalized low-rank optimization approach for topological cooperation via the newly presented Riemannian optimization algorithms. We will investigate the performance of different algorithms from the perspective of convergence rate and achievable DoF. We evaluate our model 
in different settings and demonstrate that the generalized low-rank approach can effectively enable transmitter cooperation based 
only on network topology information.

Our simulations compare the following matrix-factorization-based algorithms 
 for solving the generalized low-rank optimization problem $\mathscr{P}$:
\begin{itemize}
        \item ``AltMin'' \cite{yi2016topological}: Alternating minimization algorithm (\ref{GLRM_fixedrank_altmin1}) (\ref{GLRM_fixedrank_altmin2}) is adopted in \cite{yi2016topological} for topological transmitter cooperation problem by alternatively updating factors. For fixed $r$, (\ref{GLRM_fixedrank_altmin1}) (\ref{GLRM_fixedrank_altmin2}) are solved with gradient descent followed by backtracking line search.
        \item ``RCG'': The Riemannian conjugate gradient method is developed in Sec \ref{sec:RCG}. We implement this algorithm with \textit{Manopt} \cite{boumal2014manopt} software package.
        \item ``RTR'': The Riemannian trust-region method is developed in Sec \ref{sec:RTR} and also implemented with \textit{Manopt}. 
\end{itemize}
All algorithm are adopted with random initialization strategy for each rank $r$, and we find the minimal $r$ by increasing $r$ from 1 to $N$ until ${m}^{-0.5}\cdot\|\mathcal{A}(\bs{X})-\bs{b}\|<10^{-3}$. In our numerical experiments, the network topology and shared messages at each transmitter are generated uniformly at random with probabilities
\begin{equation}
        \textrm{Prob}((k,j)\in\mathcal{E})=\left\{\begin{aligned}
                & p, && j\ne k \\
                & 1, && j=k
        \end{aligned}\right.,
\end{equation}
and
\begin{equation}
\textrm{Prob}(j\in\mathcal{S}_k)=\left\{\begin{aligned}
                & q, && j\ne k \\
                & 1, && j=k
        \end{aligned}\right.,
\end{equation}
respectively.

\begin{figure}[h]
        \centering
        \subfloat[]{\includegraphics[width=\columnwidth]{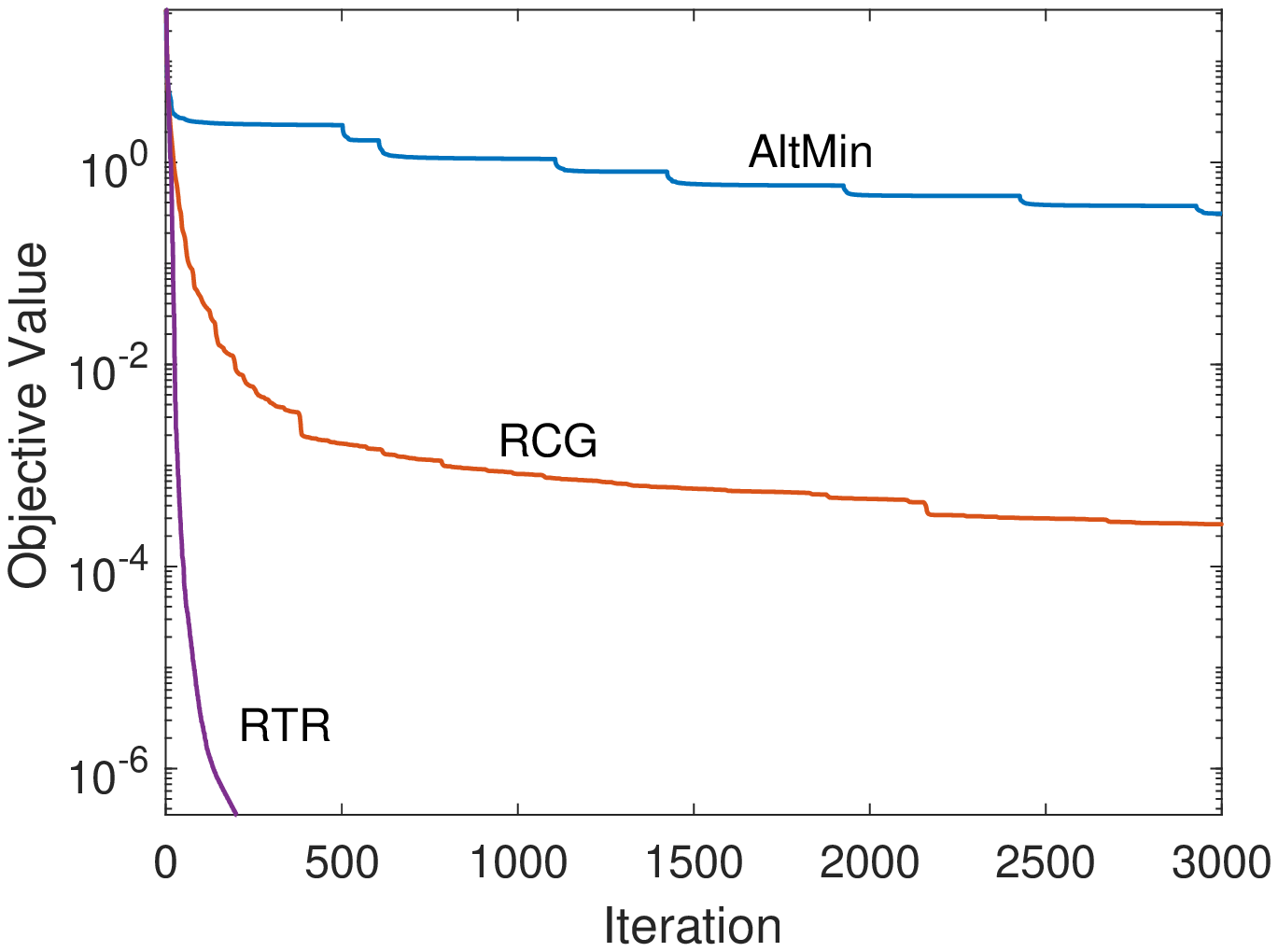}}\hfil
        \subfloat[]{\includegraphics[width=\columnwidth]{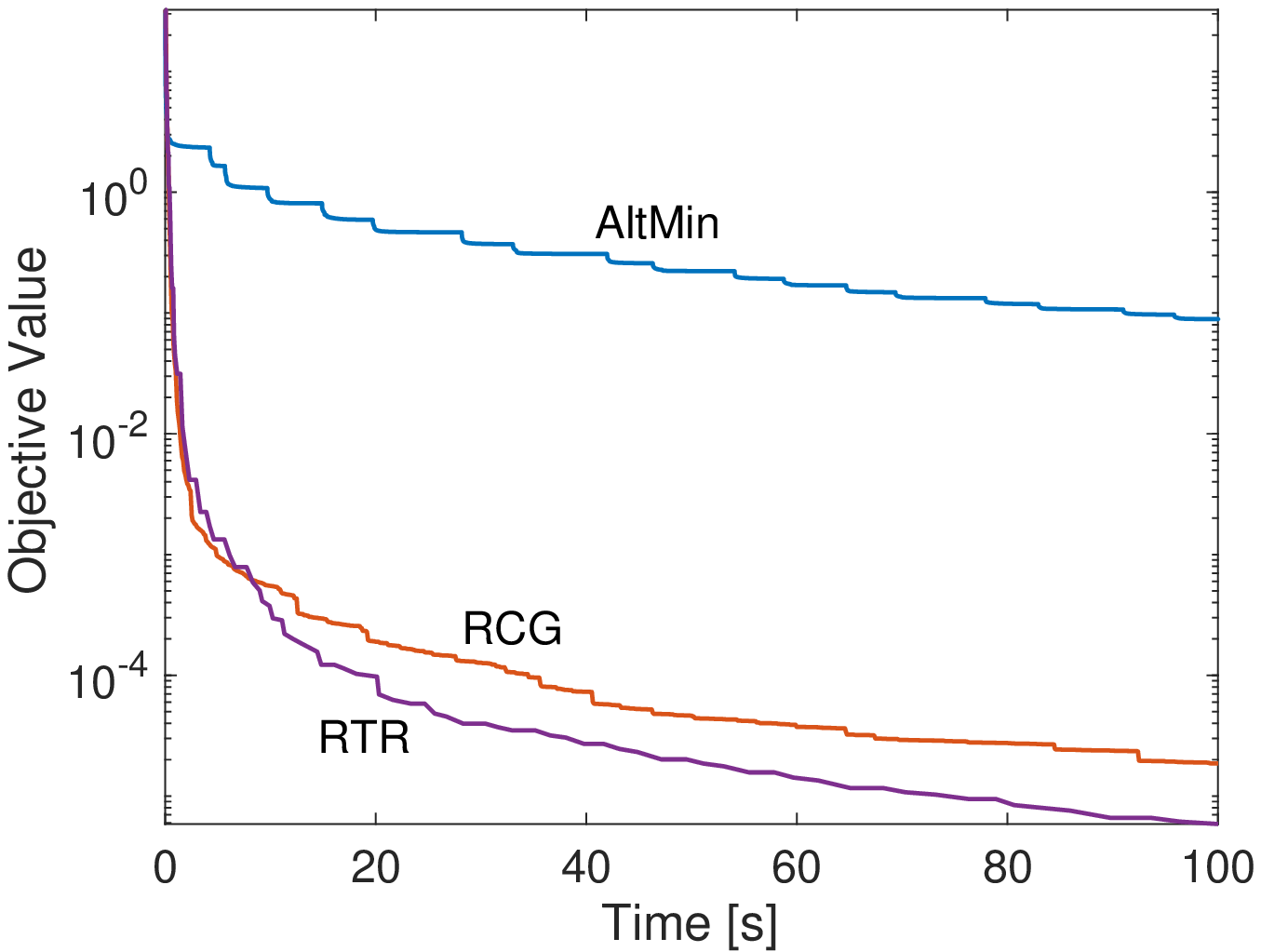}}
        \caption{Convergence rate and computing time of all algorithms with full transmitter cooperation.}
        \label{fig:convergence}
\end{figure}
\subsection{Convergence Rate}
Consider a partially connected $20$-user interference channel with full transmitter cooperation. The network topology is generated randomly and each link is connected with probability $p=0.3$. Each message is split into $3$ data streams. In this simulation, $K=20, d_1=\cdots=d_K=3$, $p=0.3$ and $r=12$. Fig. \ref{fig:convergence} shows 
the convergence behaviors of all 3 algorithms in terms of 
iterations and time. 
The results indicate that the proposed RTR algorithm 
exhibits a superlinear convergence rate, and the computing 
rate is comparable with first order RCG algorithm. 
In addition, the proposed RTR can yield a more accurate 
solution with the second-order stationary point when
compared against first order algorithms that guarantee convergence
only to first-order stationary points. 
The overall test results show that the proposed RTR and RCG 
algorithms are much more efficient than other contemporary algorithms
in terms of convergence rates and solution performance.  

To further show that the interferences are nulled, we choose the interference leakage as the metric and plot it in Fig. \ref{fig:leakage} for the same setting of Fig. \ref{fig:convergence}. The interference leakage cost is given by
\begin{equation}
    IL = \sum_{i\ne k}\sum_{j:(k,j)\in\mathcal{E},i\in\mathcal{S}_j}\!\!\!\! \!\!\! \|h_{kj}\bs{U}_{k}^{\sf{H}}\bs{V}_{ji}\Big(h_{kj}\bs{U}_{k}^{\sf{H}}\bs{V}_{ji}\Big)^{\sf{H}}\|_F^2,
\end{equation}
where the channel coefficients follow standard complex Gaussian distribution. Fig. \ref{fig:leakage} demonstrates that there is a rapid decline of interference leakage as the objective value decreases with the proposed Riemannian optimization algorithms.
\begin{figure}[h]
        \centering
        \subfloat[]{\includegraphics[width=\columnwidth]{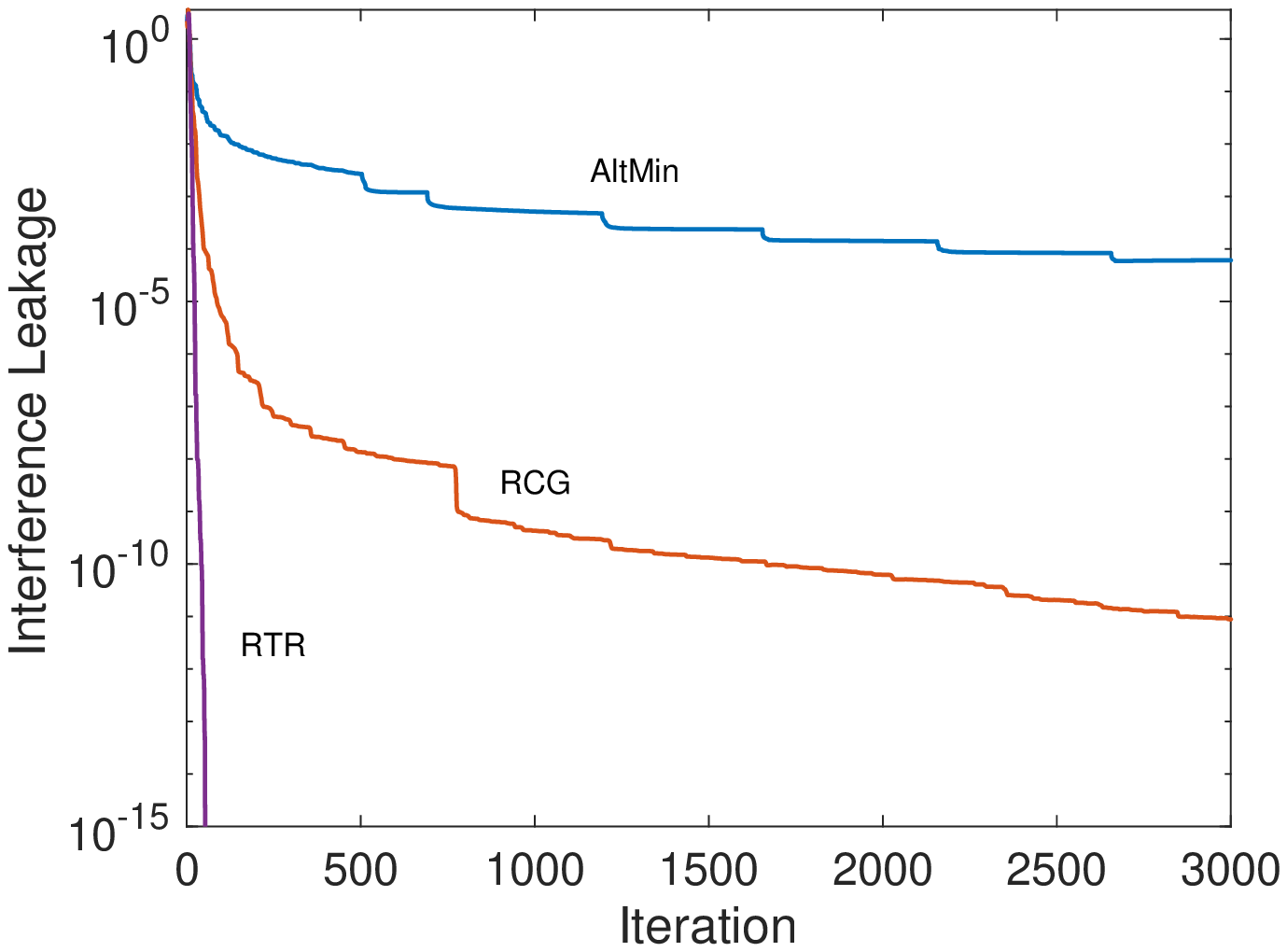}}\hfil
        \subfloat[]{\includegraphics[width=\columnwidth]{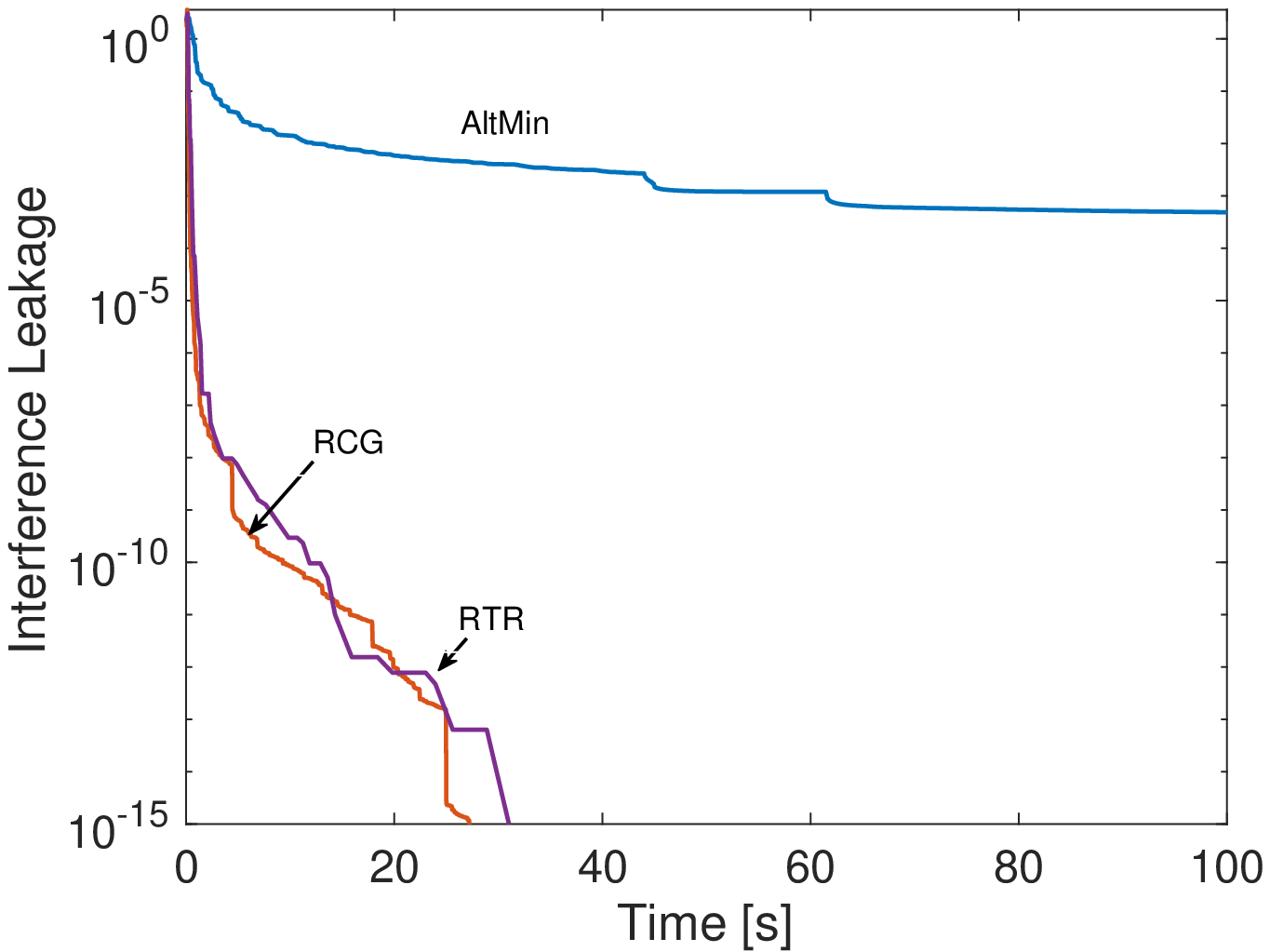}}
        \caption{Convergence of interference leakage for all algorithms with full transmitter cooperation.}
        \label{fig:leakage}
\end{figure}

\subsection{DoF over Network Topologies}
Consider a partially connected $20$-user interference channel without message splitting ($d_k=1$). The network topologies are generated randomly with different $p$. Fig. \ref{fig:connectedlinks} demonstrates the DoF over $p$ with full transmitter cooperation. Each DoF result
is averaged over 100 times. This result shows that, among the 3 
solutions,  the 
proposed RTR algorithm achieves the best performance with 
second-order stationary points. 
The Riemannian algorithms RTR and RCG significantly outperform the alternating minimization algorithm owing to their good convergence guarantee.
\begin{figure}[h]
        \centering
        \includegraphics[width=\columnwidth]{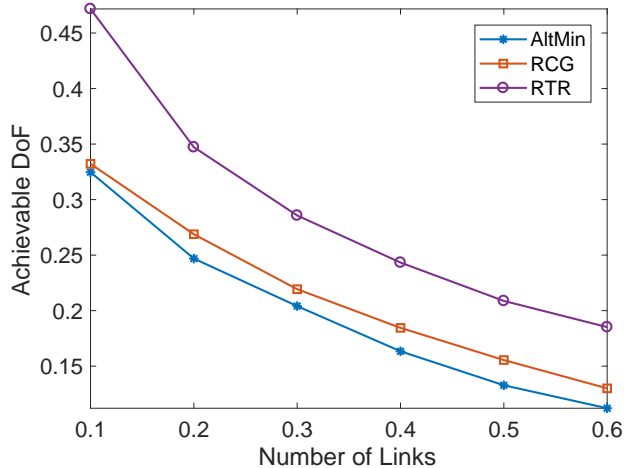}
        \caption{DoF over the number of connected links.}
        \label{fig:connectedlinks}
\end{figure}

To further justify the effectiveness of the Riemannian optimization framework, we check the recovered DoF returned by the proposed RTR algorithm for all specific network topologies $\mathcal{E}$ and specific message sharing pattern $\{\mathcal{S}_k:k\in[K]\}$ provided in \cite{yi2015topological}. Specifically, transmitter cooperation improves the symmetric DoF from $1/3$ to $2/5$ for Example 1 in Fig. 1(a), from $2/5$ to $1/2$ for Example 4 in Fig. 3(a), and from $1/3$ to $2/5$ for Example 7 in Fig. 6(a), compared with the cases without cooperation. And the optimal symmetric DoF is $1/2$ for Example 6 in Fig. 5(a) with transmitter cooperation. All these optimal symmetric DoF results can be achieved by the proposed RTR algorithm numerically. However, theoretically identifying the network topologies and the message sharing patterns for which the Riemannian trust region algorithm can provide optimal symmetric DoFs is still a challenging open problem.

\subsection{Transmitter Cooperation Gains}
We investigate the achievable DoFs in partially connected $20$-user interference channels. We randomly generate the network topologies with $p=0.2$ and simulate different algorithms under different transmitter cooperation level $q$ with single data stream. For each cooperation level, we take average over 500 channel realizations. 
Fig.~\ref{fig:cooperationlevel} shows that the 
second-order algorithm RTR can achieve 
the highest DoF among all algorithms. 
Comparing the first-order algorithms, the proposed RCG  
outperforms AltMin. With the high convergence rate and second-order stationary points solutions of RTR, the gap between RTR algorithm and other algorithms grows with $q$, which indicates 
that the proposed RTR algorithm is capable 
of fully leveraging the benefits of transmitter cooperation.
\begin{figure}[h]
        \centering
        \subfloat[]{\includegraphics[width=\columnwidth]{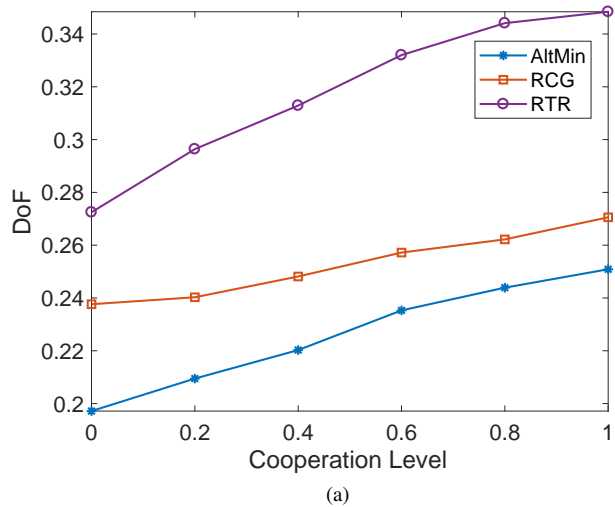}\label{fig:cooperationlevel}}\hfil
        \subfloat[]{\includegraphics[width=\columnwidth]{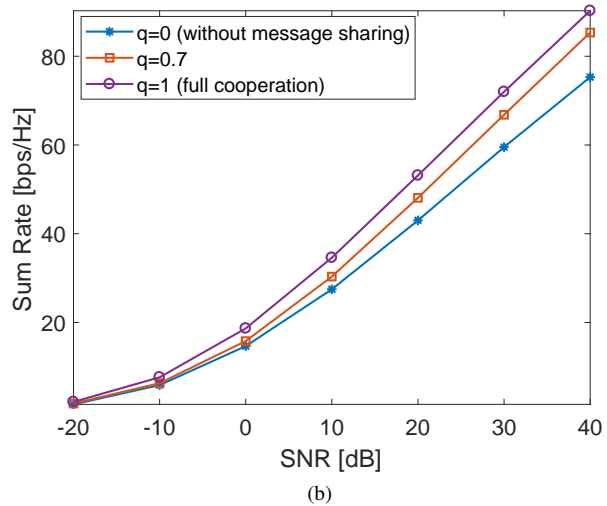}\label{fig:sumrate}}
        \caption{Benefits of transmitter cooperation: (a) DoF over different transmitters cooperation levels $q$. (b) Sum-rate over the transmit power.}
\end{figure}


To further illustrate the transmitter cooperation benefit, we evaluate the achievable sum-rate using the proposed RTR algorithm in 
Fig.~\ref{fig:cooperationlevel}. In this single data stream test
setting, $\bs{U}_k,\bs{V}_{ji}$ degenerates to vectors
 $\bs{u}_k,\bs{v}_{ji}$. Assume that each single data stream 
symbol $s_i$ has unit power, i.e., $\mathbb{E}(|s_i|^2)=1$. 
Suppose the noise is i.i.d. Gaussian, i.e., $\bs{\Sigma}_k = \sigma^2\bs{I}_r$, with $\sigma^2$ at $-120$dB. The distance $d_{ij}$ between each connected transmitter-receiver pair $(j,i)$ is uniformly distributed in $[0.1, 0.2]$ km. The fading channel model is given as
\begin{equation}
        h_{ij} = 10^{-L(d_{ij})/20}c_{ij}, (i,j)\in\mathcal{E},
\end{equation}
where the pass loss is given by $L(d_{ij}) = 128.1+37.6\log_{10}d_{ij}$ and the small scale fading coefficient is given by $c_{ij}\sim\mathcal{CN}(0,1)$. Then the sum rate per channel use is given by
\begin{equation}
        C_{\text{sum}}=\frac{1}{r}\sum_{k=1}^{K}C_k = \frac{1}{r}\sum_{k=1}^{K}\log(1+\textrm{SINR}_k),
\end{equation}
where
\begin{equation*}
        \textrm{SINR}_k = \frac{|\sum_{(k,j)\in\mathcal{E},k\in\mathcal{S}_j}h_{kj}\bs{u}_k^{\sf{H}}\bs{v}_{jk}|^2}{\sum_{i\ne k}|\sum_{(k,j)\in\mathcal{E},i\in\mathcal{S}_j}h_{kj}\bs{u}_k^{\sf{H}}\bs{v}_{ji}|^2+\|\bs{u}_k\|_2^2\sigma^2}.
\end{equation*}
Based on the singular value decomposition (SVD) $\bs{X}^*=\bs{U}\bs{\Sigma}\bs{V}^{\sf{H}}$, the transmit beamformer is simply chosen as $\bs{U}\bs{\Sigma}^{\frac{1}{2}}$, and receive beamformer is given by $\bs{V}\bs{\Sigma}^{\frac{1}{2}}$ with power normalization. Fig. \ref{fig:sumrate} shows the achievable sum-rate over different transmit power $P$. Each point is averaged 
across 100 channel realizations with random 
$\mathcal{E}$ ($p=0.2$). The result also demonstrates that the 
proposed RTR algorithm is capable of achieving high data 
rates by leveraging transmitter cooperation.

In summary, our numerical experiments demonstrate that the
proposed Riemannian trust-region algorithm is capable of obtaining high-precision solutions with second-order stationary points, 
leading to high achievable DoFs and data rates. 
Furthermore, the computation time of RTR algorithm in medium scale problems is comparable with first order algorithms, which demonstrates that the proposed RTR algorithm is a powerful algorithm capable of
harnessing the benefit of topological cooperation in
problems involving medium network sizes.

\section{Conclusions} \label{sec:conclusion}

This work investigates the opportunities of transmitter
cooperation based only on topological information
with message sharing. Our contributions include the
derivation of a generalized topological interference alignment condition, followed by the development of
a low-rank matrix optimization approach to maximize the achievable DoFs. 
To solve the resulting  generalized low-rank optimization problem 
which is nonconvex in complex field, we developed Riemannian
optimization algorithms by exploiting the complex non-compact Stiefel manifold for fixed-rank matrices in complex field. In particular, we 
adopted
the semidefinite lifting technique and Burer-Monteiro factorization approach. Our experiments demonstrated that the proposed Riemannian algorithms considerably outperformed the 
alternating minimization algorithm. Additionally, 
the proposed Riemannian trust-region algorithm achieves high DoFs with high-precision 
second-order stationary point solutions, with computation
complexity comparable with the first-order Riemannian 
conjugate gradient algorithm.

\appendices
\section{Proof of Proposition \ref{proposition:nuc}}\label{append:nuc}
For simplicity, we only give the proof of the single data stream case, while it can be readily extended to general multiple data stream cases. In this case, $\bs{X}$ is given by
\begin{equation}
        \bs{X} = \begin{bmatrix}
                \bs{u}_1^{\sf{H}}\bs{v}_{11} & \cdots & \bs{u}_1^{\sf{H}}\bs{v}_{KK} \\
                \vdots & \ddots & \vdots \\
                \bs{u}_K^{\sf{H}}\bs{v}_{11} & \cdots & \bs{u}_K^{\sf{H}}\bs{v}_{KK}
        \end{bmatrix}\in\mathbb{C}^{K\times K^2}.
\end{equation}

Let $\bs{x}_i$ denote the transpose of the $i$-th row of matrix $\bs{X}$. Problem (\ref{prob:nuc}) can be rewritten as
\begin{eqnarray}\label{prob:re_nuc}
\mathop{\textrm{minimize}}_{{\bs{X}}\in\mathbb{C}^{m\times n}}&& \|{\bs{X}}\|_* \nonumber\\
\textrm{subject to}&& \bm{1}^{\sf{H}}\bs{x}_k^{\mathcal{D}_k}=1,~\forall k\in [K]  \nonumber\\
&&\bs{x}_{k}^{\mathcal{G}_k} = \bs{0},~\forall k\in [K], 
\end{eqnarray}
in which $\bs{x}_k^{\mathcal{D}_k},\bs{x}_k^{\mathcal{G}_k}$ are vectors whose elements are sampled from $\bs{x}_k$, and $\mathcal{D}_k,\mathcal{G}_k$ are the index sets of sampling. $\mathcal{D}_k$ and $\mathcal{G}_k$ are given by
\begin{equation}
        \mathcal{D}_k = \{(j-1)*K+k:(k,j)\in\mathcal{E},k\in\mathcal{S}_j\}
\end{equation}
and
\begin{equation}
        \mathcal{G}_k = \{(j-1)*K+i:i\ne k, (k,j)\in\mathcal{E},i\in\mathcal{S}_j\},
\end{equation}
respectively.  
\begin{lemma}\label{prop:optimal}
        The optimal solution of (\ref{prob:re_nuc}), denoted by $\bs{X}^\star$, is given by
\begin{equation}
        {\bs{x}^\star}_k^{\mathcal{D}_k} = \frac{1}{|\mathcal{D}_k|}{\bs{1}},
\end{equation}
where $\bs{1}$ denotes the vector of all ones. The remaining entries of $\bs{X}^\star$ are all zeros.
\end{lemma}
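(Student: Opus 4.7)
The plan is to exploit nuclear-norm duality, namely $\|\bs{X}\|_* = \sup\{\mathrm{Re}\,\langle \bs{W}, \bs{X}\rangle : \|\bs{W}\|_{\mathrm{op}} \le 1\}$, by producing a single dual certificate $\bs{W}$ whose inner product with every feasible $\bs{X}$ is completely determined by the affine constraints. The key combinatorial observation I would isolate first is that the sampling sets $\mathcal{D}_k$ are pairwise disjoint as subsets of column indices of $\bs{X}$: each element of $\mathcal{D}_k$ is of the form $(j-1)K + k$, which is congruent to $k$ modulo $K$, so distinct $\mathcal{D}_k$ occupy disjoint column positions of $\bs{X}\in\mathbb{C}^{K\times K^2}$. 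This structural fact is what makes a clean duality argument possible.

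Building on this, I would take $\bs{W}\in\mathbb{C}^{K\times K^2}$ whose $k$-th row equals $|\mathcal{D}_k|^{-1/2}$ on the positions in $\mathcal{D}_k$ and vanishes elsewhere. By column-disjointness of the row supports, the rows of $\bs{W}$ are mutually orthogonal with unit norm, i.e.\ $\bs{W}\bs{W}^{\sf{H}} = \bs{I}_K$, whence $\|\bs{W}\|_{\mathrm{op}} = 1$. For any feasible $\bs{X}$, the inner product $\langle \bs{W}, \bs{X}\rangle$ splits row-wise; contributions from positions outside $\bigcup_k \mathcal{D}_k$ vanish because $\bs{W}$ has no mass there, and on $\mathcal{D}_k$ one invokes the affine constraint $\bs{1}^{\sf{H}}\bs{x}_k^{\mathcal{D}_k} = 1$ to conclude $\langle \bs{W}, \bs{X}\rangle = \sum_k |\mathcal{D}_k|^{-1/2}$. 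Duality then yields the lower bound $\|\bs{X}\|_* \ge \sum_k |\mathcal{D}_k|^{-1/2}$ uniformly over feasible $\bs{X}$. Notably, the free entries of $\bs{X}$ play no role, which is exactly why the bound can be matched.

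To close the argument I would verify that $\bs{X}^\star$ attains this lower bound. Using the same column-disjointness, the nonzero rows of $\bs{X}^\star$ have pairwise disjoint supports, so $\bs{X}^\star(\bs{X}^\star)^{\sf{H}}$ is diagonal with $k$-th entry $|\mathcal{D}_k|^{-1}$; hence the nonzero singular values of $\bs{X}^\star$ are exactly $|\mathcal{D}_k|^{-1/2}$, giving $\|\bs{X}^\star\|_* = \sum_k |\mathcal{D}_k|^{-1/2}$, which matches the bound. Since $\bs{X}^\star$ is evidently feasible, optimality follows. The main conceptual hurdle is discovering the right dual certificate; once the residue-mod-$K$ structure of $\mathcal{D}_k$ is isolated, the uniform-on-$\mathcal{D}_k$ choice for $\bs{W}$ is essentially forced both by the Cauchy--Schwarz extremality of uniform vectors on the equality constraint and by the orthonormal-rows requirement that drives $\|\bs{W}\|_{\mathrm{op}}$ down to $1$.
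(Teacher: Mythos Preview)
Your argument is correct and takes a genuinely different route from the paper. The paper establishes optimality of $\bs{X}^\star$ via a first-order condition: for any feasible direction $\bs{X}$ (i.e.\ $\mathcal{A}(\bs{X})=0$) it differentiates $h(t)=\|\bs{X}^\star+t\bs{X}\|_*$, using that $\bs{X}^\star(\bs{X}^\star)^{\sf H}$ is diagonal and $\mathrm{diag}(\bs{X}(\bs{X}^\star)^{\sf H})=\bs{0}$ to conclude $h'(0)=0$, whence convexity gives global optimality. You instead build an explicit dual certificate $\bs{W}$ with $\|\bs{W}\|_{\mathrm{op}}=1$ and $\langle\bs{W},\bs{X}\rangle=\sum_k|\mathcal{D}_k|^{-1/2}$ for every feasible $\bs{X}$, then match this lower bound at $\bs{X}^\star$. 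Both proofs rest on the same structural fact---the column-disjointness of the $\mathcal{D}_k$, which you make explicit via the residue-mod-$K$ observation---but your duality argument sidesteps the differentiability issue the paper has to address (smoothness of the nuclear norm near full-rank $\bs{X}^\star$) and yields the optimal value $\sum_k|\mathcal{D}_k|^{-1/2}$ as a by-product. The paper's approach, on the other hand, is slightly more self-contained in that it does not invoke the dual characterization of $\|\cdot\|_*$.
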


\begin{proof}
        Let $g(\bs{X})=\|\bs{X}\|_*=\textrm{Tr}(\sqrt{\bs{X}\bs{X}^{\sf{H}}})$. To proof Lemma \ref{prop:optimal}, it is equivalent to prove that $t=0$ is a minimum of the convex function 
        \begin{equation}
                h(t) = \|\bs{X}^\star+t\bs{X}\|_*,\mathcal{A}(\bs{X}^\star+t\bs{X})=\bs{b},t\in\mathbb{R}.
        \end{equation}
        This can be deduced from the fact that any feasible point can be expressed as $\bs{X}^\star+t\bs{X}$, and if $t=0$ is a minimum of $h(t)$, then $g(\bs{X}^\star)\leq g(\bs{X})$ always holds.

        Based on the structure of $\mathcal{D}_k$ and $\mathcal{G}_k$, we have
        \begin{equation}\label{eq:proof2}
                \bs{X}^\star{\bs{X}^\star}^{\sf{H}}=\begin{bmatrix}
                        \frac{1}{|\mathcal{D}_1|} & 0 & \cdots & 0 \\
                        0 & \frac{1}{|\mathcal{D}_2|} & \cdots & 0 \\
                        \vdots & \vdots & \ddots & \vdots \\
                        0 & 0 & \cdots & \frac{1}{|\mathcal{D}_K|}
                \end{bmatrix}.
        \end{equation}
        Then $\bs{X}^\star$ is a full rank matrix, and we can find $|t|\leq \epsilon$ such that $\bs{X}^\star+t\bs{X}$ is invertible. Therefore, the derivative of $h$ is given by
        \begin{align}
                h^\prime(t)=  \frac{1}{2}\big\langle &\left((\bs{X}^\star+t\bs{X})(\bs{X}^\star+t\bs{X})^{\sf{H}}\right)^{-\frac{1}{2}}, \nonumber\\&\bs{X}{\bs{X}^\star}^{\sf{H}}+\bs{X}^\star\bs{X}^{\sf{H}}+2t\bs{X}\bs{X}^{\sf{H}} \big\rangle.
        \end{align}
        Then we have
        \begin{equation}
                h^\prime(0)=\frac{1}{2}\big\langle \left(\bs{X}^\star{\bs{X}^\star}^{\sf{H}}\right)^{-\frac{1}{2}},\bs{X}{\bs{X}^\star}^{\sf{H}}+\bs{X}^\star\bs{X}^{\sf{H}} \big\rangle.
        \end{equation}
        Since $ \mathcal{A}(\bs{X}^\star+t\bs{X})=\bs{b}$, i.e.,
        \begin{equation}
                \bm{1}^{\sf{H}}({\bs{x}^\star}_k^{\mathcal{D}_k}+t\bs{x}_k^{\mathcal{D}_k}) = 1
        \end{equation}
        then we have $\langle{\bs{x}^\star}_k^{\mathcal{D}_k},\bs{x}_k^{\mathcal{D}_k}\rangle = 0$. Therefore,
        \begin{equation}\label{eq:proof1}
                \textrm{diag}(\bs{X}{\bs{X}^\star}^{\sf{H}}) = \textrm{diag}(\bs{X}^\star\bs{X}^{\sf{H}}) = \bs{0}. 
        \end{equation}
        From (\ref{eq:proof1}) (\ref{eq:proof2}), we can deduce that $h^\prime(0)=0$, and thus $t=0$ is a minimum of $h(t)$.
\end{proof}
From Lemma \ref{prop:optimal} we know that the optimal solution $\bs{X}^{\star}$ of (\ref{prob:nuc}) is full rank. So nuclear norm relaxation approach always fails.

\section{Proof of Proposition \ref{proposition:vertical}}\label{append:vertical}
The elements in the vertical space $\mathcal{V}_{\bs{Y}}$ must be tangential to the equivalent class $[\bs{Y}]=\{\bs{Y}\bs{Q}:\bs{Q}^{\sf{H}}\bs{Q}=\bs{I}\}$. Let $\bs{Y}(t)=\bs{Y}_0\bs{Q}(t)$ be a curve in $[\bs{Y}_0]$ through $\bs{Y}_0$ at $t=0$, i.e., $\bs{Q}(0)=\bs{I}$. Then we have
\begin{equation}\label{eq:tangentcurve}
        \bs{Y}(t)\bs{Y}(t)^{\sf{H}}=\bs{Y}_0\bs{Q}(t)\bs{Q}(t)^{\sf{H}}\bs{Y}_0^{\sf{H}}=\bs{Y}_0\bs{Y}_0^{\sf{H}}.
\end{equation}
By differentiating (\ref{eq:tangentcurve}) we get
\begin{equation}\label{eq:tangentcurve2}
        \dot{\bs{Y}}(t)\bs{Y}(t)^{\sf{H}}+\bs{Y}(t)\dot{\bs{Y}}(t)^{\sf{H}}=\bs{0}.
\end{equation}
So we deduce that $\dot{\bs{Y}}(0)$ belongs to
\begin{equation}\label{eq:vertical1}
        \{\bs{Z}\in\mathbb{C}^{N\times r}: \bs{Z}\bs{Y}_0^{\sf{H}}+\bs{Y}_0\bs{Z}^{\sf{H}}=\bs{0}\},
\end{equation}
of which $\mathcal{T}_{\bs{Y}}\mathcal{M}$ is a subset. On the other side, let $F:\bs{Y}\mapsto \bs{Y}\bs{Y}^{\sf{H}}$, then (\ref{eq:vertical1}) is $\textrm{ker}(\textrm{D}F(\bs{Y_0}))$ and $F^{-1}(\bs{Y}_0\bs{Y}_0^{\sf{H}})=[\bs{Y}_0]$. Therefore, from \cite[Sec 3.5.7]{Absil_2009optimizationonManifolds} we know
\begin{equation}
        \mathcal{T}_{\bs{Y}}\mathcal{M}=\{\bs{Z}\in\mathbb{C}^{N\times r}: \bs{Z}\bs{Y}_0^{\sf{H}}+\bs{Y}_0\bs{Z}^{\sf{H}}=\bs{0}\}.
\end{equation}
Without loss of generality, we can set
\begin{equation}\label{eq:tangentcurve3}
        \dot{\bs{Y}}(t) = \bs{Y}(t)\bs{\Omega}(t),~~\bs{\Omega}(t)\in\mathbb{C}^{r\times r},
\end{equation}
since $\bs{Y}(t)\in\mathbb{C}_*^{N\times r}$ is full rank. Then we can replace equation (\ref{eq:tangentcurve3}) in (\ref{eq:tangentcurve2}) and obtain
\begin{equation}
        \bs{Y}(t)(\bs{\Omega}(t)+\bs{\Omega}(t)^{\sf{H}})\bs{Y}(t)^{\sf{H}}=\bs{0}.
\end{equation}
Therefore, the vertical space is given by
\begin{equation}
        \mathcal{V}_{\bs{Y}}=\{\bs{Y}_0\bs{\Omega}:\bs{\Omega}^{\sf{H}}=-\bs{\Omega}\}.
\end{equation}

\section{Proof of Proposition \ref{proposition:horizontal}}\label{append:horizontal}
The horizontal space is given by
\begin{equation}
     \mathcal{H}_{\bs{Y}} = \{\overline{\bs{\xi}}\in\mathcal{T}_{\bs{Y}}\overline{\mathcal{M}}:\overline{g}_{\bs{Y}}(\overline{\bs{\xi}},\overline{\bs{\zeta}})=0,~\forall \overline{\bs{\zeta}}\in\mathcal{V}_{\bs{Y}}\},
\end{equation}
that is
\begin{equation}
        \overline{g}_{\bs{Y}}(\overline{\bs{\xi}},\bs{Y}\bs{\Omega})=0, \forall \bs{\Omega}^{\sf{H}}=-\bs{\Omega}.
\end{equation}
Since 
\begin{align}
     \overline{g}_{\bs{Y}}(\overline{\bs{\xi}},\bs{Y}\bs{\Omega})&=\trace(\overline{\bs{\xi}}^{\sf{H}}\bs{Y}\bs{\Omega}+\bs{\Omega}^{\sf{H}}\bs{Y}^{\sf{H}}\overline{\bs{\xi}}) \nonumber\\
     &=\trace(\overline{\bs{\xi}}^{\sf{H}}\bs{Y}\bs{\Omega}-\bs{\Omega}\bs{Y}^{\sf{H}}\overline{\bs{\xi}}) \nonumber\\
     &=\trace((\overline{\bs{\xi}}^{\sf{H}}\bs{Y}-\bs{Y}^{\sf{H}}\overline{\bs{\xi}})\bs{\Omega}),
\end{align}
we know the horizontal space consists of all the elements $\overline{\bs{\xi}}$ that satisfies $\trace((\overline{\bs{\xi}}^{\sf{H}}\bs{Y}-\bs{Y}^{\sf{H}}\overline{\bs{\xi}})\bs{\Omega})=0$ for all $\bs{\Omega}^{\sf{H}}=-\bs{\Omega}$. Therefore, the horizontal space is
\begin{equation}
        \mathcal{H}_{\bs{Y}}=\{\bs{\xi}\in\mathbb{C}^{N\times r}:\bs{\xi}^{\sf{H}}\bs{Y}=\bs{Y}^{\sf{H}}\bs{\xi}\}.
\end{equation}
Suppose for a vecor $\bs{\xi}\in\mathcal{T}_{\bs{Y}}\mathcal{M}$ its projection onto the vertical space is given by $\bs{\xi}^{v}=\bs{Y}\bs{\Omega}_{\bs{\xi}}$, then the horizontal projection is given by $\bs{\xi}^{h}=\bs{\xi}-\bs{Y}\bs{\Omega}_{\bs{\xi}}$, and
\begin{equation}
        {\bs{\xi}^{h}}^{\sf{H}}\bs{Y}=\bs{Y}^{\sf{H}}{\bs{\xi}^{h}}.
\end{equation}
So we can find the $\bs{\Omega}_{\bs{\xi}}$ from
\begin{align}
        &(\bs{\xi}-\bs{Y}\bs{\Omega}_{\bs{\xi}})^{\sf{H}}\bs{Y}=\bs{Y}^{\sf{H}}(\bs{\xi}-\bs{Y}\bs{\Omega}_{\bs{\xi}}) \nonumber\\\Rightarrow ~& \bs{Y}^{\sf{H}}\bs{Y}\bs{\Omega}_{\bs{\xi}}+\bs{\Omega}_{\bs{\xi}}\bs{Y}^{\sf{H}}\bs{Y} = \bs{Y}^{\sf{H}}\bs{\xi}-\bs{\xi}^{\sf{H}}\bs{Y}.
\end{align}
Then we conclude that the horizontal projection of $\bs{\xi}$ is given by
\begin{equation}
        \Pi_{\bs{Y}}^{h}\bs{\xi}=\bs{\xi}-\bs{Y}\bs{\Omega},
\end{equation}
where $\bs{\Omega}$ is the solution to the Lyapunov equation
\begin{equation}
       \bs{Y}^{\sf{H}}\bs{Y}\bs{\Omega}+\bs{\Omega}\bs{Y}^{\sf{H}}\bs{Y} = \bs{Y}^{\sf{H}}\bs{\xi}-\bs{\xi}^{\sf{H}}\bs{Y}.
\end{equation}

\section{Computing the Riemannian Gradient and Hessian}\label{append:gradhess}
We first rewrite the objective function of (\ref{GLRM_BM}) as
\begin{equation}
       f(\bs{Y}) = \frac{1}{2}\sum_{i=1}^{l}|\langle \bs{B}_i,\bs{Y}\bs{Y}^{\sf{H}} \rangle-b_i|^2.
\end{equation}
The complex gradient of $f(\bs{Y})$ is given by
\begin{align}
        f^\prime(\bs{Y}) &= \sum_{i=1}^{l} (\langle \bs{B}_i,\bs{Y}\bs{Y}^{\sf{H}} \rangle-b_i)\bs{B}_i\bs{Y} +(\langle \bs{B}_i^{\sf{H}},\bs{Y}\bs{Y}^{\sf{H}} \rangle-b_i^*)\bs{B}_i^{\sf{H}}\bs{Y} \nonumber \\
        &=\sum_{i=1}^{l}(C_i\bs{B}_i+C_i^*\bs{B}_i^{\sf{H}})\bs{Y},
\end{align}
in which $C_i=\langle \bs{B}_i,\bs{Y}\bs{Y}^{\sf{H}} \rangle-b_i$. The Riemannian gradient $\overline{\textrm{grad}}f(\bs{Y})$ is derived from (\ref{eq:gradbar}), and we find that
\begin{align}
        \textrm{D}f(\bs{Y})[\bs{\xi}] &= \frac{1}{2}\sum_{i=1}^{l} \langle \bs{B}_i,\bs{\xi}\bs{Y}^{\sf{H}}+\bs{Y}\bs{\xi}^{\sf{H}} \rangle^*(\langle \bs{B}_i,\bs{Y}\bs{Y}^{\sf{H}}\rangle-b_i) \nonumber\\&\qquad+(\langle \bs{B}_i,\bs{Y}\bs{Y}^{\sf{H}}\rangle-b_i)^*\langle \bs{B}_i,\bs{\xi}\bs{Y}^{\sf{H}}+\bs{Y}\bs{\xi}^{\sf{H}} \rangle \nonumber\\
        &=\overline{g}_{\bs{Y}}((C_i\bs{B}_i+C_i^*\bs{B}_i^{\sf{H}})\bs{Y},\bs{\xi}).
\end{align}
Therefore, $\overline{\textrm{grad}}f(\bs{Y})=f^\prime(\bs{Y})$. Then we observe that $\overline{\textrm{grad}}f(\bs{Y})^{\sf{H}}\bs{Y}=\bs{Y}^{\sf{H}}\overline{\textrm{grad}}f(\bs{Y})$, i.e., $\overline{\textrm{grad}}f(\bs{Y})$ is already in the horizontal space $\mathcal{V}_{\bs{Y}}$. So the horizontal representation of Riemannian gradient is given by 
\begin{equation}
         \textrm{grad}f(\bs{Y})=\sum_{i=1}^{l}(C_i\bs{B}_i+C_i^*\bs{B}_i^{\sf{H}})\bs{Y}.
 \end{equation}

To derive the Riemannian Hessian (\ref{eq:rhess}), we compute
\begin{align}
        \textrm{D}\overline{\textrm{grad}}f(\bs{Y})[\bs{\eta}_{\bs{Y}}]=\sum_{i=1}^{l}(&{C_{\bs{\eta}}}_{i}\bs{B}_i\bs{Y} +C_i\bs{B}_i\bs{\eta}_Y  \nonumber\\&
        + {C_{\bs{\eta}}}_{i}^*\bs{B}_i^{\sf{H}}\bs{Y} +C_i^*\bs{B}_i^{\sf{H}}\bs{\eta}_\bs{Y}),
\end{align}
where ${C_{\bs{\eta}}}_{i}=\langle \bs{B}_i,\bs{Y}\bs{\eta}_\bs{Y}^{\sf{H}} +\bs{\eta}_\bs{Y}\bs{Y}^{\sf{H}}\rangle$.
We conclude that
\begin{align}
        \textrm{Hess}f(\bs{Y})[\bs{\eta}_{\bs{Y}}]=\Pi_{\bs{Y}}^{h}\biggl(&\sum_{i=1}^{l}({C_{\bs{\eta}}}_{i}\bs{B}_i\bs{Y}  +C_i\bs{B}_i\bs{\eta}_Y \nonumber\\&+ {C_{\bs{\eta}}}_{i}^*\bs{B}_i^{\sf{H}}\bs{Y} +C_i^*\bs{B}_i^{\sf{H}}\bs{\eta}_\bs{Y})\biggr).
\end{align}

\bibliographystyle{IEEEtran}
\bibliography{GMC}

\end{document}